\documentclass[journal]{IEEEtran}
\usepackage{booktabs}
 \pdfoutput=1

\ifCLASSINFOpdf
\else
\fi

\usepackage[cmex10]{amsmath}
\usepackage{tikz,pgfplots}
\usepackage{graphicx}
\usepackage{placeins}
\usepackage{mfirstuc}
\usepackage{mathtools}
\usepackage{wrapfig}
\usepackage{pdfpages}
\usepackage{diagbox}
\usepackage{makecell}
\usepackage{amssymb}
\usepackage{bbm}
\usepackage[
ruled,
vlined,
boxed, 
linesnumbered, 
commentsnumbered]{algorithm2e}
\usepackage{siunitx}
\usepackage{amsthm}
\newtheorem{theorem}{Theorem}

\theoremstyle{definition}
\usepackage{enumitem}
\tikzset{
  pics/cross/.style={
    code={
      \draw[line width=1.5pt, red!90, line cap=round] (-0.1,-0.1) -- (0.1,0.1);
      \draw[line width=1.5pt, red!90, line cap=round] (0.1,-0.1) -- (-0.1,0.1);
    }
  },
  pics/crossblue/.style={
    code={
      \draw[line width=1.5pt, blue!70, line cap=round] (-0.1,-0.1) -- (0.1,0.1);
      \draw[line width=1.5pt, blue!70, line cap=round] (0.1,-0.1) -- (-0.1,0.1);
    }
  }
}

\newtheoremstyle{definitionStyle}%
  {0pt}%
  {0pt}%
  {\normalfont}%
  {}%
  {\bfseries}%
  {.}%
  {.5em}%
  {}%
  
\theoremstyle{definitionStyle}
\newtheorem{definition}{Definition}

\newtheoremstyle{definitionStyle}%
  {0pt}%
  {0pt}%
  {\normalfont}%
  {}%
  {\bfseries}%
  {.}%
  {.5em}%
  {}%
  
\theoremstyle{definitionStyle}

\usetikzlibrary{backgrounds,shapes,arrows,positioning}
\usetikzlibrary{shapes,snakes}
\usetikzlibrary{fit,calc} %

\usepackage[
    colorlinks=true,
    bookmarks=false,
    citecolor=blue,
    urlcolor=blue,
    linkcolor=blue
]{hyperref}

\usepackage{cite}
\usepackage{textcomp}
\usepackage{acronym}
\let\mathcal\undefined
\DeclareMathAlphabet{\mathcal}{OMS}{cmsy}{m}{n}

\newcommand{\CW}{\mathcal{C}}

\newcommand{\dmin}{d_{\mathrm{min}}}
\newcommand{\dGPC}{d_{\mathrm{GPC}}}
\newcommand{\ddesign}{d_{\mathrm{des}}}

\newcommand{\que}{\mathord{?}}

\newcommand{\ZO}{\{0, 1\}}
\newcommand{\ZQO}{\{0, \que, 1\}}

\usepackage{bm}

\newcommand{\BDD}{\text{BDD}}

\newcommand{\weight}[1]{\text{wt}\left(#1\right)}

\newcommand{\dnE}[1]{d_{\sim{\text{E}(#1)}}}
\newcommand{\dH}{d}

\newcommand{\E}{\text{E}}

\newcommand{\Eb}{E_{\text{b}}}
\newcommand{\No}{N_{\text{0}}}
\newcommand{\deltaC}{\delta_{\text{c}}}
\newcommand{\epsilonC}{\epsilon_{\text{c}}}

\newcommand{\ta}{\mathsf{T}_{\text{a}}}
\newcommand{\te}{\mathsf{T}}
\newcommand{\tee}{\mathsf{T}_{\text{e}}}

\newcommand{\yone}{\bm{y}^{(\text{1})}}
\newcommand{\ytwo}{\bm{y}^{(\text{2})}}
\newcommand{\pone}{\bm{p}^{(\text{1})}}
\newcommand{\ptwo}{\bm{p}^{(\text{2})}}

\newcommand{\wi}{\bm{w}^{(i)}}

\usepackage{stmaryrd}

\newcommand{\phicv}{\phi_{\mathsf{c}\shortrightarrow\mathsf{v}}}
\newcommand{\phicc}{\phi_{\mathsf{c}\shortrightarrow\mathsf{c}}}

\acrodef{AD}{anchor decoding}
\acrodef{AGC}[AGC]{automatic gain control}
\acrodef{API}[API]{application program interface}
\acrodef{AWGN}[AWGN]{additive white Gaussian noise}
\acrodef{AVX}[AVX]{advanced vector extensions}

\acrodef{BCH}{Bose--Chaudhuri--Hocquenghem}
\acrodef{BDD}{bounded distance decoding}
\acrodef{BEC}[BEC]{binary erasure channel}
\acrodef{BEE-PC}{binary message passing based on \ac{EaE} decoding for \acp{PC}}
\acrodef{BER}[BER]{bit error rate}
\acrodef{BDMC}[BDMC]{binary \acs{DMC}}
\acrodef{BI-AWGN}{binary input additive white Gaussian noise}
\acrodef{BM}{Berlekamp--Massey}
\acrodef{BP}[BP]{belief propagation}
\acrodef{BPSK}{binary phase shift keying} 
\acrodef{BSC}[BSC]{binary symmetric channel}

\acrodef{CEL}[CEL]{Communications Engineering Lab}
\acrodef{CPU}[CPU]{central processing unit}
\acrodef{CN}{check node}
\acrodef{DFT}[DFT]{discrete Fourier transform}
\acrodef{DMC}[DMC]{discrete memoryless channel}
\acrodef{DRS}{dynamic reliability score}
\acrodef{DRSD}{dynamic reliability score decoder}
\acrodef{RDRSD}[rDRSD]{refined dynamic reliability score decoder}
\acrodef{dSNR}[dSNR]{design \ac{SNR}}
\acrodef{DSP}[DSP]{digital signal processing}
\acrodef{DTP}{decoding transition probability}
\acrodef{EaE}{error-and-erasure}
\acrodef{EaED}{\ac{EaE} decoder}
\acrodef{EEP}{error-evaluator polynomial}
\acrodef{ELP}{error-locator polynomial}
\acrodef{EMP}{extrinsic message passing}
\acrodef{FEC}[FEC]{forward error correction}
\acrodef{FER}[FER]{frame error rate}
\acrodef{FFT}[FFT]{fast Fourier transform}
\acrodef{GLDPC}[GLDPC]{generalized \ac{LDPC}}
\acrodef{GPP}[GPP]{general purpose processor}
\acrodef{GPC}{generalized product code}
\acrodef{GMD}{generalized minimal distance}
\acrodef{GMDD}{generalized minimal distance decoder}
\acrodef{HD}{hard decision}
\acrodef{HDD}{hard decision decoding}
\acrodef{HPC}{half product code}
\acrodef{HIHO}{hard input hard output}
\acrodef{HRB}{highly reliable bit}
\acrodef{iBDD}{iterative bounded distance decoding}
\acrodef{IDFT}[IDFT]{inverse discrete Fourier transform}
\acrodef{iEaED}{iterative error-and-erasure decoding}

\acrodef{KIT}[KIT]{Karlsruhe Institute of Technology}
\acrodef{LDPC}[LDPC]{low-density parity-check}
\acrodef{LR}[LR]{likelihood ratio}
\acrodef{LLR}[LLR]{log-likelihood ratio}
\acrodef{LTE}[LTE]{Long Term Evolution}
\acrodef{MAP}[MAP]{maximum a posteriori}
\acrodef{MC}{miscorrection}
\acrodef{MD}{miscorrection detection}
\acrodef{MDS}{maximum distance separable}
\acrodef{ML}[ML]{maximum likelihood}
\acrodef{MSP}{minimal stall pattern}
\acrodef{NCG}{net coding gain}
\acrodef{OSD}{ordered statistics decoding}
\acrodef{OTN}{optical transport network}
\acrodef{PC}{product code}
\acrodef{PCM}{parity check matrix}
\acrodefplural{PCM}[PCMs]{parity check matrices}
\acrodef{PDF}{probability density function}

\acrodef{RS}{Reed--Solomon}
\acrodef{RV}[RV]{random variable}

\acrodef{SABM}{soft-aided bit marking}
\acrodef{SA-HDD}{soft-aided \ac{HDD}}
\acrodef{SABM-SR}{SABM with scaled reliabilities}
\acrodef{SC}{spatially-coupled}
\acrodef{SDD}{soft decision decoding}
\acrodef{SISO}{soft input soft output}
\acrodef{SNR}[SNR]{signal-to-noise ratio}
\acrodef{SPC}[SPC]{single parity check}
\acrodef{TPD}{turbo product decoding}
\acrodef{UB}{union bound}
\acrodef{VN}{variable node}

\definecolor{kit-green100}{rgb}{0,.59,.51}
\definecolor{kit-green70}{rgb}{.3,.71,.65}
\definecolor{kit-green50}{rgb}{.50,.79,.75}
\definecolor{kit-green30}{rgb}{.69,.87,.85}
\definecolor{kit-green15}{rgb}{.85,.93,.93}
\definecolor{KITgreen}{rgb}{0,.59,.51}

\definecolor{KITpalegreen}{RGB}{130,190,60}
\definecolor{KITcyanblue}{RGB}{80,170,230}
\definecolor{KITorange}{rgb}{.87,.60,.10}
\definecolor{tab104}{RGB}{152,78,163}
\definecolor{colorDRSD}{RGB}{255,128,0}
\definecolor{colorideal}{RGB}{224,243,248}
\definecolor{colorDRSDdeter}{RGB}{168,129,188}
\definecolor{colorDRSDplus20}{RGB}{255,102,178}
\definecolor{colorBEE-PC}{RGB}{96,96,96}
\definecolor{colorAD}{RGB}{0,102,51}
\definecolor{colorDRSD3bit}{RGB}{252,174,145}
\definecolor{colorDRSD4bit}{RGB}{251,106,74}
\definecolor{colorDRSD6bit}{RGB}{165,15,21}

\definecolor{cR1}{rgb}{0,.59,.51}
\definecolor{cR2}{RGB}{162,34,35}
\definecolor{cR12}{RGB}{217,115,14}

\definecolor{ber1}{RGB}{22,74,132}
\definecolor{ber2}{RGB}{52,120,178}
\definecolor{ber3}{RGB}{86,142,195}
\definecolor{ber4}{RGB}{110,180,210}

\begin{document}

\title{Low-Complexity Soft-Aided Error-and-Erasure Decoding for Generalized Product Codes}

\author{Sisi Miao,~\IEEEmembership{Student~Member,~IEEE} and Laurent Schmalen,~\IEEEmembership{Fellow,~IEEE}
\thanks{This work has received funding from the European Research Council (ERC)
under the European Union’s Horizon 2020 research and innovation programme
(grant agreement No. 101001899) and the German Federal Ministry of Research, Technology and Space (BMFTR) under grant agreement 16KIS2081 (PONGO).}
\thanks{All authors are with Karlsruhe Institute of Technology (KIT), Commu\-nications Engineering Lab (CEL), Hertzstr. 16, 76187 Karlsruhe, Germany. E-mail: \{\texttt{laurent.schmalen@kit.edu\}}}}

\markboth{Journal of \LaTeX\ Class Files,~Vol.~14, No.~8, August~2015}%
{Shell \MakeLowercase{\textit{et al.}}: Bare Demo of IEEEtran.cls for IEEE Journals}

\maketitle

\IEEEpeerreviewmaketitle

 \begin{abstract}
    Generalized product codes (GPCs) combine excellent high-rate performance with low-complexity hardware implementations.
    We propose the refined dynamic reliability score decoder (\acs{RDRSD}), a hard-message-passing iterative error-and-erasure decoder that uses dynamic reliability scores.
    Its syndrome-domain implementation has complexity comparable to hard-decision \ac{iBDD}.
    Across various GPCs and decoding configurations, \acs{RDRSD} provides different complexity--performance trade-offs and achieves approximately \SI{1}{dB} coding gain over \ac{iBDD}.
    For GPCs with component codes of small error-correcting capability, we also analyze the error floor and propose a soft-aided post-processing step that significantly lowers it.
\end{abstract}

\acresetall
\begin{IEEEkeywords}
Generalized product codes, soft-aided decoding, error floors, forward error correction, optical communication
\end{IEEEkeywords}

\IEEEpeerreviewmaketitle

\section{Introduction}

\Acp{GPC}, or product-like codes, are widely used in high-throughput systems, particularly \acp{OTN}~\cite{ITU_G9751_2018,OIF_400ZR_2020}, due to their excellent high-rate performance and efficient hardware implementation. Their regular structure enables massive parallelization, and their algebraic component codes support syndrome-domain \ac{iBDD} with up to two orders of magnitude less internal data flow than \ac{LDPC} codes~\cite{staircaseCode}.

To improve GPC decoding and meet stringent optical-communication \ac{FEC} requirements~\cite{graell20forward}, \ac{SDD} algorithms have been introduced.
A well-known example is Chase--Pyndiah decoding~\cite{pyndiahNearoptimumDecodingProduct1998} and its low-complexity and soft-output variants~\cite{AlDweik2009hybrid,StraßhoferGMI,miao2026improved}.
Other approaches, such as iterative \ac{OSD}~\cite{SheniOSD} and \ac{BP} decoding~\cite{SheniBP}, achieve gains of up to \SI{0.2}{dB} over conventional Chase--Pyndiah decoding at moderately higher complexity.
However, \ac{SDD} is challenging to implement beyond \SI{1}{Tb/s}, since soft-message passing typically requires internal data flow several times the net data rate. The demand for higher rates and lower energy consumption has therefore motivated soft-aided hard-decision decoding.
These algorithms approach \ac{iBDD} complexity by incorporating limited soft information~\cite{sheikh2018iterative,liga2019novel,SheniOSD,ZhaoThreshold,ZhuStaircase,sheikh2021novel,11313545}.
Although they provide significant waterfall gains, their implementation complexity and error-floor behavior are often not fully characterized.

In previous work~\cite{miao2022JLT,Miao22ECOC,Rapp24ECOC}, we proposed the \ac{DRSD}, a low-complexity, high-performance soft-aided decoder for GPCs. Here, we introduce the refined DRSD (\acs{RDRSD}), which reduces complexity and memory while preserving or improving performance. It avoids the sorting required by \ac{DRSD} and reduces message passing for soft-information updates. The \acs{RDRSD} uses an \ac{iBDD}-like syndrome-domain architecture and gains about \SI{1}{dB} over \ac{iBDD} across the considered GPCs. We also analyze its PC error floor, derive a union bound for miscorrection-free \ac{EaE} decoding, and propose soft-aided post-processing to lower the floor. Secs.~\ref{sec:preliminary}--\ref{sec:DRSD} introduce the preliminaries and decoders; Secs.~\ref{sec:waterfallResults}--\ref{sec:spr} present waterfall results, error-floor analysis, and post-processing, followed by the conclusion.

\textit{Notation}: We use boldface letters to denote vectors, e.g., $\bm{y}$, and $y_i$ denotes the $i$th component of $\bm{y}$. With ${\mathbb{F}_2=\{0,1\}}$, we introduce the Hamming weight ${\weight{\bm{y}}=|\{i:y_i\neq 0\}|}$ for ${\bm{y}\in\mathbb{F}_2^n}$, the Hamming distance $\mathrm{d}({\bm{y}_1},{\bm{y}_2})=\weight{\bm{y}_1-\bm{y}_2}$, and the notation
${\dnE{\bm{y}'}(\bm{y}',\bm{c}):=|\{i\in \{1,2,\ldots,n\}:c_i\neq y'_i,y'_i\neq \que\}|}$ for ${\bm{c}\in \mathbb{F}_2^n}$ and ${\bm{y}'\in\{0,1,\que\}}$ where ``$\que$'' is an erasure. The operator $\E(\bm{y}')$ gives the number of erasures in $\bm{y}'$.
We define a function $\mu(x) = 1-2x$ and $\psi(x) = (1-\text{sign}(x))/2$. Both $\mu(x)$ and $\psi(x)$ apply to matrices element-wise. Finally, $Q(x)\!=\!\tfrac{1}{\sqrt{2\pi}}\int_x^{\infty}\text{e}^{-\frac{u^2}{2}}\mathrm{d}u$ is the Gaussian tail distribution function.

\section{Preliminaries}\label{sec:preliminary}
\subsection{Channel Models}
We consider three communication channels: the \ac{BSC}, the \ac{EaE} channel, and the \ac{BI-AWGN} channel.

A \ac{BSC} is defined by the crossover probability ${\delta\!=\!P(Y\!\neq \!X)}$.

For the \ac{EaE} channel, ${X\in\{0,1\}}$ and ${Y\in\{0,1,\que\}}$, where ``$\que$'' is an erasure. The transition probabilities are ${P(Y=1-x|X=x)=\deltaC}$, ${P(Y=\que|X=x)=\epsilonC}$, and ${P(Y=x|X=x)=1-\deltaC-\epsilonC}.$

For the \ac{BI-AWGN} channel with \ac{BPSK}, we map bit ${x\mapsto \mu(x)=:\tilde{x}}$, and the received symbol is ${\tilde{y}_i = \tilde{x}_i + n_i\in \mathbb{R}}$, where ${n_i\sim\mathcal{N}(0,\sigma^2)}$ with ${\sigma^2 = (2R\cdot \Eb/\No)^{-1}}$ and $R$ is the rate of the GPC.

Quantizing the \ac{BI-AWGN} output yields the \ac{BSC} and \ac{EaE} channels. For an erasure threshold \({\te\geq0}\), declare $y=\que$ if \({\tilde{y}\in[-\te,\te]}\), and otherwise set \(y=\psi(\tilde{y})\in\{0,1\}\). The resulting \ac{EaE} parameters are
\begin{equation}\label{eq:EaEprob}
    \deltaC =  Q\left(\tfrac{\te+1}{\sigma}\right), \quad \epsilonC = 1 - Q\left(\tfrac{\te-1}{\sigma}\right) - \deltaC.
\end{equation}
For \(\te=0\), this reduces to a \ac{BSC} with \(\delta = Q\left(\tfrac{1}{\sigma}\right)\).

\subsection{\Ac{BCH} Codes}
We use binary \ac{BCH} component codes because of their cyclic structure, large minimum distances, and efficient algebraic \ac{BDD}.

Let $\mathcal{C}_{\text{BCH}}[n, k, \dmin]$ denote a primitive, narrow-sense, binary \ac{BCH} code with block length $n$, dimension $k$, and minimum Hamming distance $\dmin$, yielding code rate $R_{\text{c}} = k/n$. Let $t$ denote the number of correctable errors and $\ddesign$ the designed distance, which is a lower bound on $\dmin$.\footnote{Throughout this work, all considered primitive BCH codes saturate this bound, i.e., $\dmin=\ddesign$.} For primitive BCH codes, $n = 2^b - 1$, $k \geq n - bt$, and $\ddesign = 2t + 1$, where $\mathbb{F}_{2^b}$ is the finite field over which the associated \ac{RS} supercode is defined. For brevity, we sometimes describe the parameters of a BCH code with $[n,t]$.

High-rate GPCs typically use BCH component codes with small minimum distance. Adding a parity bit makes all codewords even-weight and gives an extended BCH (eBCH) code.\footnote{The even-weight BCH subcode performs almost identically in our simulations; hence, we consider only eBCH codes.} Extending $[n,k,\dmin]$ yields $\mathcal{C}_{\text{eBCH}}[n+1,k,\dmin+1]$: the error-correcting capability $t$ is unchanged, while the erasure-correcting capability increases by $1$.

\subsection{Bounded Distance Decoding (BDD) for BCH Codes}\label{sec:BDDforBCH}

The \ac{BDD} rule for a binary vector $\bm{y}\in \ZO^n$ is
\begin{equation*}
    \BDD(\bm{y})=\begin{cases}
    \bm{c}& \exists \bm{c}\in \CW \text{ such that }\mathrm{d}(\bm{y},\bm{c}) \leq t\\
    \bm{y}&\text{otherwise}.
    \end{cases}
\end{equation*}
\Ac{BDD} corrects $\bm{y}$ to $\bm{c}$ if $\mathrm{d}(\bm{y},\bm{c})\leq t$ and otherwise declares failure. A \textit{miscorrection} occurs if it returns $\bm{c}\neq\bm{x}$, where $\bm{x}$ was transmitted, and can significantly degrade hard-decision GPC decoding (Sec.~\ref{sec:EaED}).

The \ac{BDD} algorithm for BCH codes is typically implemented by syndrome-based algebraic decoding: after computing the syndrome, the decoder obtains the error-locator polynomial, e.g., using the \ac{BM} algorithm, whose roots are then found by a Chien search to identify the error locations~\cite{roth_introduction_2006}.

For high-rate \ac{BCH} codes, the realization of \ac{BDD} can be significantly simplified, requiring only a small constant number of additions, multiplications, and lookup-table operations. This is achieved by parameterizing the solution of the error-locator polynomial in terms of the syndrome, allowing it to be directly obtained by substituting the syndrome values. Consequently, the Chien search can be replaced by a lookup-table operation, since the error-locator polynomial has low degree for small \(t\), as described in~\cite[Appendix~C]{SukmadjiRSBCH}. For example, for a $t=2$ code, a total of $1$ addition, $5$ multiplications, $1$ inversion, $1$ division, and $1$ lookup-table operation are needed to execute a \ac{BDD} step in the worst case. For a $t=3$ code, $8$ additions, $10$ multiplications, $2$ divisions, and $2$ lookup-table operations are needed.

\subsection{Generalized Product Codes}

We introduce GPCs through generalized Tanner graphs~\cite{Tanner81recursive}, using product and staircase codes as block and \ac{SC} examples.

For a block $[N,K,\dGPC]$ \ac{GPC}, its generalized Tanner graph is a bipartite graph consisting of a set $\mathcal{N}=\{\mathsf{v}_1,\mathsf{v}_2,\ldots,\mathsf{v}_N\}$ of \acp{VN}, with $|\mathcal{N}|=N$, and a set ${\mathcal{M}=\{\mathsf{c}_1,\mathsf{c}_2,\ldots,\mathsf{c}_M\}}$ of \acp{CN}, with $|\mathcal{M}|=M$. Each \ac{VN} represents a data bit, while each \ac{CN} represents a component code, which in this work is a binary BCH code or its extended code. An edge connects a VN and a CN if the corresponding data bit participates in the associated BCH codeword. The resulting GPC block length $N$, code dimension $K$, and minimum distance $\dGPC$ are determined by the structure of the generalized Tanner graph and the parameters of the component codes.

\ac{SC} GPCs append new component codewords continuously during transmission, yielding a generalized Tanner graph with infinitely many nodes. Since practical \ac{SC} GPCs are often periodic, their structure can be specified by describing one period of the generalized Tanner graph.

In this work, we focus on GPCs in which all \acp{VN} have degree two and all \acp{CN} employ the same component code. These restrictions enable simple hardware implementation and include many practical GPC constructions such as PCs and staircase codes. During iterative decoding, the degree-two \acp{VN} simply exchange messages between their two incident \acp{CN}. At the \acp{CN}, BCH decoding algorithms are executed to update the data buffer and, when available, the associated soft information.

Before introducing the example codes, we define two mappings $\phicv$ and $\phicc$ to facilitate the description of the iterative decoding algorithms. The mapping $\phicv:(j,\kappa)\mapsto i$ maps the $\kappa$-th incident VN of CN $\mathsf{c}_j$ to the VN $\mathsf{v}_i$. The mapping $\phicc:(j,\kappa)\mapsto j'$ maps a CN $\mathsf{c}_j$ to another CN $\mathsf{c}_{j'}$ such that the two CNs are both incident to the same VN $\mathsf{v}_i$ with $i=\phicv(j,\kappa)$.

\begin{figure}[t]
\centering
\begin{tabular}{c}
\begin{tikzpicture}[scale=0.5,every label/.style={font=\footnotesize},]
    \fill[blue!30] (0,0) rectangle (4,4);
    \fill[KITpalegreen!50!green!30] (0,1) rectangle (3,4);
    \draw[step=1, very thick] (0,0) grid (4,4);
    \foreach \row in {0,1,...,3} {
        \foreach \col in {0,1,...,3} {
            \pgfmathtruncatemacro{\idx}{4*(3-\row)+\col+1}
            \node at (\col+0.5,\row+0.5) {\footnotesize $\mathsf{v}_{\idx}$};
        }
    }
    \foreach \i in {5,6,7,8} {
    \node[draw=none] at (\i-4.5,4.5) {$\mathsf{c}_{\i}$};
    }
    \foreach \i in {1,2,3,4} {
    \node[draw=none] at (4.5,4.5-\i) {$\mathsf{c}_{\i}$};
    }

\end{tikzpicture}
\\

\footnotesize (a)  An example PC with $[n=4,k=3] $ component code \\
\\
\begin{tikzpicture}[
    vnode/.style={circle, draw, very thick, minimum size=2mm,inner sep=0pt},
    cnode/.style={rectangle, draw, fill=black, minimum size=2mm},
    every label/.style={font=\footnotesize},
    scale=0.55
]

\foreach \i in {1,2,3,5,6,7,9,10,11} {
    \node[vnode,label=above:{$\mathsf{v}_{\i}$},fill=KITpalegreen!50!green!30] (v\i) at (\i,2) {};
}

\foreach \i in {4,8,12,13,14,15,16} {
    \node[vnode,label=above:{$\mathsf{v}_{\i}$},fill=blue!30] (v\i) at (\i,2) {};
}

\foreach \i in {1,...,4} {
    \node[cnode,label=below:{$\mathsf{c}_{\i}$}] (c\i) at (\i+3,0) {};
}
\foreach \i in {5,...,8} {
    \node[cnode,label=below:{$\mathsf{c}_{\i}$}] (c\i) at (\i+5,0) {};
}

\foreach \k in {1,...,4} {
    \foreach \i in {1,...,4} {
        \pgfmathtruncatemacro{\vidx}{4*(\k-1)+\i}
        \draw (v\vidx) -- (c\k);
    }
}

\foreach \k in {1,...,4} {
    \foreach \i in {1,...,4} {
        \pgfmathtruncatemacro{\vidx}{4*(\i-1)+\k}
        \pgfmathtruncatemacro{\cidx}{4+\k}
        \draw (v\vidx) -- (c\cidx);
    }
}
\end{tikzpicture}\\
\footnotesize (b) Generalized Tanner graph of this PC
\end{tabular}

\caption{Representation of a PC from the data-buffer perspective in (a) and the generalized Tanner graph perspective in (b).}
\label{fig:PCillustration}
\end{figure}
\begin{figure}[t]
\centering
\begin{tabular}{c}
\begin{tikzpicture}[scale=0.5,every label/.style={font=\footnotesize},]
    \foreach \i in {1,2,3,4,5,6} {
    \node[draw=none] at (4.5,6.5-\i) {$\mathsf{c}_{\i}$};
    }

    \node[draw=none] at (1,6.5) {virtual};
    \node[draw=none] at (3,6.5) {real};
    \fill[blue!30] (3,0) rectangle (4,6);
    \fill[KITpalegreen!10] (0,0) rectangle (2,6);
    \fill[KITpalegreen!50!green!30] (2,0) rectangle (3,6);
    
    \draw[very thick] (0,-0.5)--(0,6);
    \draw[very thick] (4,-0.5)--(4,6);
    \draw[very thick] (0,0)--(4,0);
    \draw[very thick] (0,2)--(4,2);
    \draw[very thick] (0,4)--(4,4);
    \draw[very thick] (0,6)--(4,6);
    \draw[very thick] (2,-0.5)--(2,6);

    \draw[step=1] (0,0) grid (4,6);
    \foreach \row in {0,1,...,5} {
        \foreach \col in {0,1} {
            \pgfmathtruncatemacro{\idx}{2*(5-\row)+\col+1}
            \node at (\col+2.5,\row+0.5) {\footnotesize  $\mathsf{v}_{\idx}$};
        }
    }

\foreach \row in {4,5} {
    \foreach \col in {0,1} {
        \node at (\col+0.5,\row+0.5) {$0$};
    }
}

\foreach \x in {0,1} {
        \foreach \y in {0,1} {
            \pgfmathtruncatemacro{\idx}{4*\x+\y+1}
            \node at (0.5,3.5-\y-2*\x) {\footnotesize  $\mathsf{v}_{\idx}$};
            \pgfmathtruncatemacro{\idxx}{4*\x+\y+3}
            \node at (1.5,3.5-\y-2*\x) { \footnotesize  $\mathsf{v}_{\idxx}$};
        }
    }
\node[draw=none] at (3,-0.2) {$\vdots$};

\end{tikzpicture}\\
 \footnotesize (a) A staircase code with $[n=4,k=3]$ component codes \\
\\

\begin{tikzpicture}[
    vnode/.style={circle, draw, very thick, minimum size=2mm, inner sep=0pt},
    cnode/.style={rectangle, draw, fill=black, minimum size=2mm},
    every label/.style={font=\footnotesize},
    scale=0.55
]

\def\xvsep{0.8}   %
\def\xgsep{3.4}    %
\def\yvn{2.0}      %
\def\ycn{0.0}      %
\def\xcshift{1.0} %
\def\xdots{11.0}   %

\foreach \i in {1,...,3} {
    \foreach \j in {1,3} {
        \pgfmathtruncatemacro{\vidx}{4*(\i-1)+\j}
        \pgfmathsetmacro{\x}{(\i-1)*\xgsep + \j*\xvsep}
        \node[vnode,label=above:{$\mathsf{v}_{\vidx}$},fill=KITpalegreen!50!green!30] (v\vidx) at (\x,\yvn) {};
    }
}

\foreach \i in {1,...,3} {
    \foreach \j in {2,4} {
        \pgfmathtruncatemacro{\vidx}{4*(\i-1)+\j}
        \pgfmathsetmacro{\x}{(\i-1)*\xgsep + \j*\xvsep}
        \node[vnode,label=above:{$\mathsf{v}_{\vidx}$},fill=blue!30] (v\vidx) at (\x,\yvn) {};
    }
}

\foreach \j in {1,...,4} {
    \pgfmathtruncatemacro{\vidx}{\j}
    \pgfmathsetmacro{\x}{-\xgsep + \j*\xvsep}
    \node[vnode,fill=KITpalegreen!10,line width = 0.5pt] (vv\vidx) at (\x,\yvn) {};
}

\foreach \i in {1,...,3} {
    \foreach \j in {1,2} {
        \pgfmathtruncatemacro{\cidx}{2*(\i-1)+\j}
        \pgfmathsetmacro{\x}{(\i-1)*\xgsep + \j*\xvsep + \xcshift}
        \node[cnode,label=below:{$\mathsf{c}_{\cidx}$}] (c\cidx) at (\x,\ycn) {};
    }
}

\foreach \k in {1,...,6} {
    \foreach \i in {1,2} {
        \pgfmathtruncatemacro{\vidx}{2*(\k-1)+\i}
        \draw (v\vidx) -- (c\k);
    }
}

\draw (v1) -- (c3);
\draw (v3) -- (c3);
\draw (v2) -- (c4);
\draw (v4) -- (c4);

\draw (v5) -- (c5);
\draw (v7) -- (c5);
\draw (v6) -- (c6);
\draw (v8) -- (c6);

\draw (vv1) -- (c1);
\draw (vv3) -- (c1);
\draw (vv2) -- (c2);
\draw (vv4) -- (c2);

\draw (v9)  -- (\xdots,\ycn+0.65);
\draw (v11) -- (\xdots+0.35,\ycn+0.65);
\draw (v10) -- (\xdots+0.70,\ycn+0.65);
\draw (v12) -- (\xdots+1.05,\ycn+0.65);

\node[draw=none] at (\xdots,\ycn) {$\ldots$};

\end{tikzpicture}\\
\footnotesize  (b) Generalized Tanner graph of the staircase code in (a)
\end{tabular}

\caption{Representation of a staircase code from the data-buffer perspective in (a) and the generalized Tanner graph perspective in (b).}
\label{fig:SCCillustration}
\end{figure}

\subsubsection{Product Codes (PCs)}\label{sec:pc}
A \ac{PC} is a 2D array whose rows and columns are codewords of a component code. A PC with $[n,k,\dmin]$ component code has parameters ${N=n^2}, {K=k^2}, {\dGPC=\dmin^2}$.
Fig.~\ref{fig:PCillustration} shows a PC with ${[n=4,k=3]}$ component codes and its Tanner graph. Each row CN connects to column CNs through $n$ degree-two VNs and vice versa. We index the bit in row $a$ and column $b$ as $\mathsf{v}_{(a-1)n+b}$, the row CNs from $1$ to $n$, and the column CNs from $n+1$ to $2n$. Thus,

\[\phicv(j,\kappa) = \begin{cases}
                         (j-1)n+\kappa & \text{if } j\leq n\\
                         (\kappa-1) n + j-n& \text{if } j> n\\
\end{cases}\]
and
\[\phicc(j,\kappa)= \begin{cases}
    \kappa+n & \text{if } j\leq n\\
    \kappa & \text{if } j> n.\\
\end{cases}\]

\subsubsection{Staircase Codes (SCCs)}\label{sec:scc}
A prominent \ac{SC} GPC is the staircase code~\cite{staircaseCode}, described here in the zipper-code framework~\cite{sukmadji2022zipper}. Fig.~\ref{fig:SCCillustration}\nobreakdash-(a) shows its data-buffer representation for a \([4,3]\) component code.
The buffer consists of a \emph{real buffer}, shown on the right, whose bits are transmitted over the physical channel, and a \emph{virtual buffer}, shown on the left, whose bits are not transmitted but are deterministically and causally mapped from the real buffer. This mapping ensures that each row of the combined buffer forms a valid component codeword. All-zero bits are inserted at the beginning of the virtual buffer to initialize the encoder.
During transmission, new information bits, shown in dark green, are appended to the real buffer and encoded jointly with the mapped virtual bits, shown in light green, to generate the parity bits, shown in blue. This yields a streaming construction for arbitrarily long data sequences. The rate of the staircase code is therefore $R=2\frac{k}{n}-1.$ Fig.~\ref{fig:SCCillustration}\nobreakdash-(b) depicts the corresponding generalized Tanner graph with an infinite and periodic structure.

In a staircase code, blocks of data bits in the real buffer are transposed and copied into the subsequent block of the virtual buffer. To be precise, let $\omega=\frac{n}{2}$ be the width of both the real and virtual buffers. Let $\mathsf{c}_j$, with $j\in \mathbb{N}_{+}$, denote the CN corresponding to the $j$-th row of the data buffer. Then, the VNs in the $j$-th row and $i$-th column of the real buffer are indexed as $\mathsf{v}_{(j-1)\omega+i}$. For the $\kappa$-th incident VN of the $j$-th CN, let $x=\lfloor \frac{j-1}{\omega} \rfloor $ and $y=(j-1)\mod \omega$. We have

\[\phicc(j,\kappa)= \begin{cases}
                        (x-1)\omega+\kappa &\text{if } \kappa \leq \omega\\
    (x+1)\omega + \kappa -\omega & \text{if } \kappa > \omega,
\end{cases}\]
and
 \[\phicv(j,\kappa) = \begin{cases}
     y-\omega + \omega \cdot \phicc(j,\kappa) +1 & \text{if } \kappa \leq \omega\\
     \omega(j-1) + \kappa -\omega & \text{if } \kappa > \omega.
 \end{cases}\]

\section{IBDD of GPCs}\label{sec:iBDD}

\subsection{Algorithm Description}
For a block \ac{GPC}, one \ac{iBDD} iteration consists of decoding each \ac{CN} once using \ac{BDD}. The \ac{iBDD} algorithm stops when either all syndromes are zero, indicating that no correctable errors remain, or when a predefined maximum number of iterations \(L\) is reached.

To simplify the implementation, the syndrome of each \ac{CN} word is typically stored, since \ac{BDD} requires only the syndrome to estimate the error locations. At each \ac{CN} decoding step, up to \(t\) data bits are flipped, and the affected syndromes are updated accordingly.

For \ac{SC} GPCs, a windowed decoding schedule is employed to accommodate their streaming nature. Specifically, \(W\) \acp{CN} are grouped to form a decoding window. Similar to the block GPC case, the \acp{CN} within this window are decoded over \(L\) iterations. Then, the window is shifted forward by an integer multiple of \(\omega\). For staircase codes, we consider a shift of \(\omega\) CNs (one block). The data bits associated with the first \(\omega\) \acp{CN} in the window are output and the iterative decoding process continues.

The \ac{iBDD} algorithms for block and \ac{SC} GPCs are described in Algorithm~\ref{alg:iBDD} and Algorithm~\ref{alg:iBDDwindow}, respectively.

\begin{algorithm}[t]
    \footnotesize
    \DontPrintSemicolon
    \caption{IBDD of GPCs}\label{alg:iBDD}
    \textbf{Input}: $\bm{Y}\in\{0,1\}^{N}$\;

    \tcp{Initialize syndrome buffer}
    \For{$j=1,2,\ldots,M$}{
        $\bm{S}_j \gets \textsc{ComputeSyndrome}(\mathsf{c}_j)$ \;
    }

    \tcp{Iterative decoding}
    \For(\tcp*[f]{$L$: number of iterations}){$\ell=1,2,\ldots,L$}{
        \For(\tcp*[f]{$j$: index of the CN}){$j=1,2,\ldots,M$}{
            [$\mathcal{L}_e, n_e$] $\gets \text{BDD}(\bm{S}_j)$\;
            \If(\tcp*[f]{$n_e$: number of errors}){$n_e > 0$}{
                \For(\tcp*[f]{$\mathcal{L}_e$: error locations}){$e \in \mathcal{L}_e$}{
                \tcp{flip data bit}
                    $Y_{\phicv(j,e)} \gets Y_{\phicv(j,e)}  \oplus 1$\; 
                     \tcp{update syndrome register}
                    $\bm{S}_{\phicc(j,e)}\gets \textsc{ComputeSyndrome}(\mathsf{c}_{\phicc(j,e)})$\;
                }
            }

        }
        \If{\(\bm{S}_{j}=\bm{0}\) for all \(j\)}{\Return $\bm{Y}$}
    }
    \textbf{Output}: $\bm{Y}\in\{0,1\}^{N}$\;
\end{algorithm}

\begin{algorithm}[t]
    \footnotesize
    \DontPrintSemicolon
    \caption{Windowed iBDD of SC GPCs}\label{alg:iBDDwindow}
    $i \gets 1$\tcp*[f]{$i$: starting CN index of the window} \;
    \While{True}{
        \For(\tcp*[f]{$L$: number of iterations}){$\ell=1,2,\ldots,L$}{
            Block iBDD for $\mathsf{c}_j, j \in \{i,i+1,\ldots, i + W-1\}$\;
        }
        $i \gets i +\omega$\;
    }

\end{algorithm}

\subsection{Complexity Analysis}
Based on Algorithm~\ref{alg:iBDD}, we discuss the low complexity of \ac{iBDD} from three perspectives.

\subsubsection{Computational Complexity}
Upon initialization, the syndromes of all \acp{CN} are computed with complexity \(\mathcal{O}(N)\). Each subsequent iteration has complexity \(\mathcal{O}(M)\), since each \ac{CN} performs only a small, constant number of operations to decode its syndrome, as described in Sec.~\ref{sec:BDDforBCH}. Note that the \ac{BDD} step for \acp{CN} with a zero syndrome is skipped, and the actual number of \ac{BDD} executions should be evaluated by simulation.

\subsubsection{Memory}
A data buffer (\(N\) bits) and, typically, a syndrome buffer (\(btM\) bits) are required, e.g., as in~\cite{FougstedtVLSI}. In total, the required memory is \(M_{\text{iBDD}}= N + btM\) bits. Since \ac{iBDD} is an intrinsic message-passing decoder, no additional memory is needed to store the channel output for later use.

\subsubsection{Internal Decoder Data Flow}
Upon initialization, all bits are read once to compute the syndromes. During iterative decoding, most of the data buffer remains unchanged and is updated only when a \ac{BDD} correction occurs, which affects at most \(t\) bits per component codeword in each iteration. In contrast, the syndrome buffer is accessed frequently for both read and write operations. However, the syndrome buffer is significantly smaller than the data buffer.

\section{Error-and-Erasure Decoding (EaED)}\label{sec:EaED}

This section describes the \ac{EaED} algorithm with miscorrection detection, which is a key component of the \acs{RDRSD}. Since \ac{EaED} serves as a component-code decoder, we consider only a component (e)BCH code when referring to (code)words in this section.

\subsection{EaED with Miscorrection Detection}
This algorithm improves upon the versions in~\cite{rapp2021error,miao2022JLT} by enabling a list-based decoding option, which allows a larger number of errors and erasures to be corrected.

Let $\bm{x}\in\CW$ be the transmitted codeword and $\bm{y}\in\ZQO^n$ its noisy received version.
The number of erasures in $\bm{y}$ is denoted by $E$ and
the decoder output is denoted by ${\bm{w}\coloneqq\text{EaED}(\bm{y})}$.
We further introduce a design parameter ${\mathcal{J}\in \mathbb{N}_{+}}$ that controls the performance--complexity trade-off. %

First, initialize an empty candidate list \(\mathcal{W}=\emptyset\). If \(E=0\), set \(J=1\). If \(E>0\), set
\[
J\coloneqq \min\{2^{E-1},\mathcal{J}\}.
\]
Then, $J$ independent pairs of complementary random vectors $\pone,\ptwo\in\ZO^{E}$ satisfying $\pone+\ptwo=(1,1,\ldots,1)$ are generated. Each pair $(\pone,\ptwo)$ defines a \emph{filling pattern} for the erasures.

We then apply the following procedure to each test pattern.
By replacing the erasures in $\bm{y}$ using $\pone$ and $\ptwo$, we obtain two test patterns $\yone,\ytwo\in\ZO^n$ for each pair. Note that $\bm{y} = \yone = \ytwo$ for $E=0$.
Next, we perform a \ac{BDD} step for each test pattern. Specifically, for each generated test pattern indexed by \(i\), we compute $\wi\coloneqq\BDD(\bm{y}^{(i)})$. If $\wi\in\CW$, we perform miscorrection detection using \eqref{eq:miscorrectionCheck}, as described later in Sec.~\ref{subsec:miscorrectionDetection}. If $\wi$ is classified as a miscorrection, it is discarded. Otherwise, we compute its distance to $\bm{y}$ over the non-erased positions of $\bm{y}$, i.e.,
\(
    \dH=\dnE{\bm{y}} \left(\bm{y}, \wi\right),
\)
and add the tuple $(\wi,\dH)$ to the candidate list $\mathcal{W}$. This procedure is repeated for all \(J\) filling-pattern pairs.

Finally, if $\mathcal{W}=\emptyset$, the decoder declares a failure and outputs $\bm{w}=\bm{y}$. Otherwise, the decoder outputs the decision codeword
\({\text{argmin}}_{\bm{w}\in \mathcal{W}} \, \dnE{\bm{y}} \left(\bm{y}, \bm{w}\right)\). Ties are broken at random.

The complete \ac{EaED} procedure is summarized in Algorithm~\ref{alg:eaed}. Note that when $E=0$ and no miscorrection detection is performed, the \ac{EaED} reduces to conventional \ac{BDD}.

\begin{algorithm}[t]
\footnotesize
\DontPrintSemicolon
\caption{EaE decoder (EaED)}
\label{alg:eaed}

\textbf{Input}: $\bm{y}\in\ZQO^n$\;

$E\gets$ $\E(\bm{y})$\tcp*[r]{Number of erasures in $\bm{y}$}
$\mathcal{W} \gets \emptyset$\tcp*[r]{Set of candidate codewords}
$\mathcal{F}\gets \emptyset$\tcp*[r]{Set of conflicting bits}
\If{$E>\dmin$}{
    return $\{\bm{y}\}$, $\mathcal{F}$
}

\For{$J=1,2,\ldots,\max\{1,\min\{2^{E-1},\mathcal{J}\}\}$}{
    $\pone,\ptwo \gets$ two random complementary vectors in $\ZO^{E}$\;
    
    $\bm{y}^{(i)} \gets \bm{y}$ with erasures replaced by $\bm{p}^{(i)}$\;
    
    \For{$i=\{1,\ldots, 1+\mathbbm{1}_{E>0}\}$}{
        $\bm{w}^{(i)} \gets \BDD(\bm{y}^{(i)})$\;
        
        \If{$\bm{w}^{(i)} \in \CW$}{
            \If{miscorrection detection}{
            Check $\bm{w}^{(i)}$ with \eqref{eq:miscorrectionCheck}\;
                $\mathcal{F}\gets\{j:y_j\neq \que, w^{(i)}_{j}\neq y_j\}\cup \mathcal{F}$ \;
                \If{$\bm{w}^{(i)}$ is detected as a miscorrection}
                {continue\;}
            }
            $\dH_i\gets \dnE{\bm{y}} (\bm{y}, \wi)$\;
            add $(\bm{w}^{(i)},\dH_i)$ to $\mathcal{W}$\;
        }
    }
}

\eIf{$\mathcal{W}=\emptyset$}{
    $\bm{w} \gets \bm{y}$\tcp*[r]{Decoding failure}
}{
    select $\bm{w}={\text{argmin}}_{\bm{c}\in \mathcal{W}} \, \dnE{\bm{y}} \left(\bm{y}, \bm{c}\right)$;
}

\textbf{Output}: $\bm{w}\in \CW \cup \{\bm{y}\}$, $\mathcal{F}$\;
\end{algorithm}

Another commonly used \ac{EaE} decoder is the one-step algebraic \ac{EaE} decoding algorithm proposed by Forney~\cite{forney1965decoding}. It extends the Gorenstein--Zierler algorithm~\cite[Ch.~6]{roth_introduction_2006} and corrects errors and erasures jointly by solving the key equation. Forney's \ac{EaE} decoder follows the decoding rule
\begin{equation*}
    \bm{w}=
    \begin{cases}
    \bm{c}, & \exists\, \bm{c}\in \CW \text{ such that } \E(\bm{y}) + 2\dnE{\bm{y}}(\bm{y},\bm{c}) < \dmin,\\
    \bm{y}, & \text{otherwise}.
    \end{cases}
\end{equation*}
We do not use Forney's algebraic \ac{EaE} decoder as the component-code decoder, since it is limited by the minimum-distance decoding radius and, based on our observations, performs worse than the proposed \ac{EaED} in iterative decoding when soft information is available~\cite{rapp2021error}.

For reference, we also define an ideal EaED decoder that achieves miscorrection-free decoding by appending genie-aided miscorrection detection to EaED, i.e.,
\[
\text{idealEaED}(\bm{y}) =
\begin{cases}
    \text{EaED}(\bm{y}) & \text{if } \text{EaED}(\bm{y}) = \bm{x},\\
    \mathsf{fail} & \text{if } \text{EaED}(\bm{y}) \neq \bm{x},
\end{cases}
\]
where $\bm{x}$ denotes the transmitted component-code codeword.

For clarity, we define several iterative decoding algorithms for GPCs. A decoder that uses an \ac{EaED} without miscorrection detection as its component decoder is referred to as \emph{\ac{iEaED}}. Replacing this component decoder with an ideal \ac{EaED} yields the \emph{ideal~\ac{iEaED}}. When $\te=0$, the \emph{ideal \ac{iEaED}} reduces to the \emph{ideal \ac{iBDD}}.

\subsection{Miscorrection Analysis}\label{sec:miscorrection}
A miscorrection occurs when the \ac{BDD} outputs an incorrect codeword that is closer to the received word than to the transmitted codeword.
This is more likely for component codes with small minimum distance, where codewords are relatively close to each other.
Miscorrections are therefore particularly detrimental to iterative hard-decision decoding of high-rate \acp{GPC}.
Typically, when \ac{BDD} is applied to a word with more than $t$ errors, a miscorrection increases the number of erroneous bits by $t$.
Moreover, the miscorrected word is declared as a valid codeword with zero syndrome and can no longer be detected by the associated CN.
This may eventually lead to a stall pattern and cause an iterative decoding failure.

The miscorrection probability for \ac{BDD} was derived in~\cite{mceliece1986decoder} and extended to EaED in~\cite{Miao26IZS}.
We give an overview of the results here. We use $U$ and $E$ to denote the number of errors and erasures in a word, respectively.
\setlength{\tabcolsep}{2pt}
\begin{table}[t]
    \centering
    \caption{Decoding success probability for different numbers of errors~\(U\) and erasures~\(E\) for the \([256,239,6]\) eBCH code.}
    \label{tab:decoding_success}
    \begin{tabular}{c|ccccccccccc}
        \hline
        \diagbox[width=3.5em,height=2em]{\footnotesize $U$}{\footnotesize $E$}
        & 0 & 1 & 2 & 3 & 4 & 5 & 6 & 7 & 8 & 9 & 10 \\
        \hline
        0
        & 1 & 1 & 1 & 1 & 1 & 1 & 0.69 & 0.45 & 0.29 & 0.18 & 0.11 \\
        1
        & 1 & 1 & 1 & 1 & 0.62 & 0.37 & 0.22 & 0.12 & 0.069 & 0.039 & 0.021 \\
        2
        & 1 & 1 & 0.38 & 0.25 & 0.093 & 0.062 & 0.023 & 0.016 & 0.0057 & 0.0039 & 0.0014 \\
        3 & 0 & 0&0& 0 & 0&0& 0 & 0&0& 0 & 0\\
        \hline
    \end{tabular}
\end{table}

\begin{table}[t]
    \centering
    \caption{Miscorrection probability for different numbers of errors~\(U\) and erasures~\(E\) for the \([256,239,6]\) eBCH code.}
    \label{tab:miscorrection}
    \begin{tabular}{c|ccccccccccc}
        \hline
        \diagbox[width=3.5em,height=2em]{\footnotesize $U$}{\footnotesize $E$}
        & 0 & 1 & 2 & 3 & 4 & 5 & 6 & 7 & 8 & 9 & 10 \\
        \hline
        0
        & 0 & 0 & 0 & 0 & 0 & 0 & 0 & 0.27 & 0.21 & 0.41 & 0.31 \\
        1
        & 0 & 0 & 0 & 0 & 0.0012 & 0.31 & 0.24 & 0.44 & 0.33 & 0.48 & 0.36 \\
        2
        & 0 & 0 & 0.12 & 0.37 & 0.31 & 0.47 & 0.36 & 0.49 & 0.37 & 0.50 & 0.38 \\
        3
        & 0 & 0.49 & 0.37 & 0.50 & 0.38 & 0.50 & 0.38 & 0.50 & 0.38 & 0.50 & 0.38 \\
        4
        & 0.49 & 0.50 & 0.38 & 0.50 & 0.38 & 0.50 & 0.38 & 0.50 & 0.38 & 0.50 & 0.38 \\
        5
        & 0.0039 & 0.50 & 0.38 & 0.50 & 0.38 & 0.50 & 0.38 & 0.50 & 0.38 & 0.50 & 0.38 \\
        \hline
    \end{tabular}
\end{table}

\subsubsection{BDD}
For primitive \ac{BCH} codes, the miscorrection probability is zero for \(U \leq t\) and is approximately \(\frac{1}{t!}\) for \(U>t\).

For e\ac{BCH} codes, odd-weight codewords of weight \(w\) in the primitive code are mapped to codewords of weight \(w+1\) in the extended code.
Using this property, and by slightly extending the results of~\cite{mceliece1986decoder}, it was shown in~\cite{Miao26IZS} that the miscorrection probability is zero for \(U \leq t\).
For \(U>t\), the value is nearly zero if \(U+t\) is odd. If \(U+t\) is even, it is approximately \(\frac{1}{t!}\).

\subsubsection{EaED}

We consider the commonly used case with ${\mathcal{J}=1}$.

For primitive BCH codes, the miscorrection probability is zero when \( 2U + E < \dmin \). When \( 2U + E \gg \dmin \), the miscorrection probability is close to \( {1 - (1-\left(\frac{1}{t!}\right))^2 }\). For values of \( 2U + E \) that are slightly above \( \dmin \), the miscorrection probability increases from zero to \( {1 - (1-\left(\frac{1}{t!}\right))^2 }\).

For eBCH codes, the results do not admit a simple closed-form approximation. In Tab.~\ref{tab:decoding_success} and Tab.~\ref{tab:miscorrection}, we list the decoding success and miscorrection probabilities when decoding the \([256,239,6]\) eBCH code with \( U \) errors and \( E \) erasures. As observed from the numerical results, the decoding success probability does not drop abruptly to zero once \(2U+E\) exceeds \(\dmin-1\); instead, it decreases from $1$ to $0$. However, for large values of \(2U+E\), the miscorrection probability increases compared to that of conventional \ac{BDD}, which is listed in the first column of Tab.~\ref{tab:miscorrection}.

The decoding success-rate analysis motivates lines~5--6 of Algorithm~\ref{alg:eaed}, which allow decoding when \(E=\dmin\), a case not permitted in conventional \ac{EaE} decoding~\cite{MoonBook}. The success and miscorrection rates show that \ac{EaED} provides a low-complexity method for decoding beyond the minimum distance using \ac{BDD}, albeit at the cost of an increased miscorrection probability. This highlights the need for effective miscorrection detection by exploiting soft information.

\section{\acs{RDRSD} Algorithm}\label{sec:DRSD}
In this section, we present the \acs{RDRSD}, which improves upon the conventional \ac{DRSD}~\cite{miao2022JLT}. The block and windowed versions are summarized in Algorithm~\ref{alg:DRSD} and Algorithm~\ref{alg:DRSDwindow}, respectively, and described in detail below.

\subsection{Refined Dynamic Reliability Scores (DRSs)}

The \ac{DRS} serves a role similar to soft information, such as \acp{LLR}, in soft-decision decoding algorithms. Specifically, it assigns a reliability score to each bit in a GPC, with higher scores indicating higher reliability. We denote the vector of bit \ac{DRS} values by \(\bm{D}=(D_1,D_2,\ldots,D_N)\).

The entries of \(\bm{D}\) are quantized using \(q\) bits, giving the score range \(\{0,1,\ldots,2^q-1\}\). They are updated through hard messages, i.e., one-bit signals that indicate, for example, whether a score should be increased by one or left unchanged. Similarly to \ac{iBDD}, the \ac{DRS} uses an intrinsic message-passing mechanism: after initialization, the channel output is not stored separately. Once the channel observations have been converted into \(\bm{D}\), decoding proceeds solely based on the \ac{DRS} values, the ternary data buffer, and the syndrome buffer, which are updated directly throughout the iterations.

\subsubsection{Initialization} \label{sec:DRSini}
The bit \ac{DRS} \(D_i\) is initialized using precomputed thresholds. If \(|\tilde{y}_i|\in[T_{d},T_{d+1})\), where \(T_{d}\) and \(T_{d+1}\) are the thresholds associated with \(d\) and \(d+1\), respectively, then \(D_i=d\).
Following~\cite{miao2022JLT,Rapp24ECOC}, these thresholds are chosen such that the initial values of \(D_i\) are assigned to asymptotically equal-sized reliability classes. Let $\lambda = D_{\max} - D_{\min}+1$ denote the number of initial reliability classes.
With channel symmetry and equiprobable inputs, it is sufficient to compute the threshold probabilities from the distribution of \(|\tilde{Y}|\).
The thresholds are chosen to satisfy
\(
    \Pr\!\big(T_{d}\leq |\tilde{Y}| < T_{d+1}\big)
    = \frac{1}{\lambda}.
\)
For the BI-AWGN channel, this condition becomes
\[
\begin{aligned}
&Q\!\big(\tfrac{T_d-1}{\sigma}\big)
-
Q\!\big(\tfrac{T_{d+1}-1}{\sigma}\big)
+
Q\!\big(\tfrac{T_d+1}{\sigma}\big)
-
Q\!\big(\tfrac{T_{d+1}+1}{\sigma}\big)
= \tfrac{1}{\lambda},
\end{aligned}
\]
with $T_{D_{\min}}=0$ and $T_{D_{\max}+1}=\infty.$
In addition, for the \acs{RDRSD}, we propose to use a \emph{CN reliability score}, or \emph{CN DRS}, which quantifies the reliability of a CN after decoding. We denote the vector of CN DRS values by \(\bm{D}_{\text{CN}}=(D_{\text{CN},1},D_{\text{CN},2},\ldots,D_{\text{CN},M})\).
In \acp{GPC}, a decoded codeword that passes the miscorrection check is considered highly reliable~\cite{hager2018approaching}.
We therefore assign a relatively high initial value \(d_0\) to the corresponding entry of \(\bm{D}_{\text{CN}}\) and increase it further for each iteration in which the decoded codeword remains unchanged. Here, \(d_0\) is code-dependent.

During decoding, the reliability value used for the \(\kappa\)-th bit of the currently decoded CN \(\mathsf{c}_j\) is \(D_{\text{CN},\phicc(j,\kappa)}\) if the CN \(\mathsf{c}_{\phicc(j,\kappa)}\) is already a valid codeword. Otherwise, it is \(D_{\phicv(j,\kappa)}\).

\subsubsection{Miscorrection Detection with DRS}\label{subsec:miscorrectionDetection}
The key idea inherited from \ac{DRSD} is to exploit the soft information provided by the \ac{DRS} to assist \ac{EaED}, so that its behavior approaches that of an ideal \ac{EaED} in which all miscorrections are discarded.
As shown in Algorithm~\ref{alg:eaed}, whenever \ac{BDD} outputs a valid codeword, a \ac{DRS}-based miscorrection detection step is performed.
This step is described in detail below.

The first approach for miscorrection detection is based on anchor bits, i.e., highly reliable bits whose \(D_i\) values exceed a threshold \(\ta\), where \(\ta\) is an optimizable parameter.
Any \ac{BDD} output that conflicts with an anchor bit, i.e., attempts to flip it, is declared a miscorrection and discarded.
As shown in~\cite{Miao26IZS}, this simple approach can keep the miscorrection probability below \(0.01\) under moderate channel conditions.

An alternative approach can be used in addition to the anchor bits. For a \ac{BDD} output $\bm{w}$, we compute the sum of the \(D_i\) values over the flipped positions, i.e.,
    $\sum_{i:w_i\neq y_i,\, y_i\neq \que} D_i$.
If this sum exceeds a threshold $\tee$, the codeword is also declared a miscorrection and discarded.
\begin{algorithm}[t]
    \footnotesize
    \DontPrintSemicolon
    \caption{\acs{RDRSD} of block GPCs}\label{alg:DRSD}
    \textbf{Input}: $\tilde{\bm{Y}}\in\mathbb{R}^{N}$\;

$\bm{Y}\gets \bm{0}^{N}$, $\bm{D}\gets \bm{0}^{N}$, $\bm{D}_{\text{CN}}\gets d_0\cdot\bm{1}^{M}$\;

    \For{$i=1,2,\ldots,N$}{
    \tcp{Initialize data buffer}
    \lIf{$|\tilde{y}_{i}|\leq \te$}{$y_{i}=\que$}
    \lElseIf{$\tilde{y}_{i}>\te$}{$y_{i}=0$}
    \lElse{$y_{i}=1$}
    \tcp{Initialize bit DRSs}
    \lIf{$T_{d}\leq|\tilde{y}_{i}|<T_{d+1}$}{$D_{i}=d$}
    }

        \tcp{Initialize syndrome buffer}
    \For{$j=1,2,\ldots,M$}{
        $\bm{S}_j \gets \textsc{ComputeSyndrome}(\mathsf{c}_j)$\;
    }

$\bm{D}_{\text{CN}}\gets$ initial CN DRSs $d_0$\;
\tcp{Iterative decoding}
\For(\tcp*[f]{$L$: number of iterations}){$\ell=1,2,\ldots,L$}{
  \For{$j=1,2,\ldots,M$}{
    \uIf{$\mathsf{c}_j$ is already a codeword}{
      $D_{\text{CN},j}\gets D_{\text{CN},j}+1$\;
    }
    \Else{
	      $[\bm{w},\mathcal{F}]\gets $decode $\mathsf{c}_j$ with Algorithm~\ref{alg:eaed}\;
	      \If{Decoding success}{
	        \For{$e\in \{e:w_e\neq y_e\}$}{
	          $Y_{\phicv(j,e)} \gets w_{e}$\;
	          $\bm{S}_{\phicc(j,e)}\gets \textsc{ComputeSyndrome}(\mathsf{c}_{\phicc(j,e)})$\;
	        }
	    
	      }
      \For{$\kappa \in \mathcal{F}$}{
        reduce DRS of $\kappa$-th bit of CN\;
      }

    }

  }
  \If{all syndrome $\bm{S}_j=\bm{0}$ and no erasures are left}{
      \Return $\bm{Y}$\;
  }
  update $\ta$ and $\tee$ based on $\ell$\;
}
    Fill erasures in \(\bm{Y}\) with i.i.d. Bernoulli-\(\frac{1}{2}\) bits\;
    \textbf{Output}: $\bm{Y}\in\{0,1\}^{N}$\;
\end{algorithm}

\begin{algorithm}[t]
    \footnotesize
    \DontPrintSemicolon
    \caption{Windowed \acs{RDRSD} of \ac{SC} GPCs}\label{alg:DRSDwindow}
    $i \gets 1$\tcp*[f]{$i$: starting CN index of the window} \;
    \While{True}{
        \For(\tcp*[f]{$L$: number of iterations}){$\ell=1,2,\ldots,L$}{
            Block \acs{RDRSD} for $\mathsf{c}_j, j \in \{i,i+1,\ldots, i + W-1\}$\;
        }
        $i \gets i+\omega$\;
    }

\end{algorithm}

In practice, both miscorrection detection methods can be enabled simultaneously, particularly when $\mathcal{J}>1$. Overall, a \ac{BDD} decision is declared a miscorrection if
\begin{align}\label{eq:miscorrectionCheck}
\{i : w_i \neq y_i,\, y_i \neq \que, D_i > \ta^{(\ell)}\}\neq \emptyset \\ \nonumber
\text{ or } \sum_{i:w_i\neq y_i,\, y_i\neq \que} D_i > \tee^{(\ell)}.
\end{align}
The thresholds $\ta^{(\ell)}$ and $\tee^{(\ell)}$ are updated according to the current iteration number $\ell$. A simple update strategy is to increase both thresholds every five iterations, as in~\cite{miao2022JLT}.

\subsubsection{Updating DRSs}
At the beginning of each CN decoding step for \(\mathsf{c}_j\), if the corresponding CN word is already a valid codeword, \(D_{\text{CN},j}\) is increased by \(1\).
Otherwise, the CN word is decoded using Algorithm~\ref{alg:eaed}.
After the \ac{EaED} decision is made, the \(D_i\) values of the bits that conflict with any \ac{BDD} decision, i.e., the bits in \(\mathcal{F}\), are decreased by \(1\).
All updated entries of \(\bm{D}\) and \(\bm{D}_{\text{CN}}\) are clipped to the range \([0,2^q-1]\).

\subsection{Decoding Algorithm}
We first describe the \acs{RDRSD} algorithm for block \acp{GPC}, which is summarized in Algorithm~\ref{alg:DRSD}.

At the beginning of decoding, the data buffer is initialized with ternary decisions obtained from the channel output.
If \(\tilde{y}\in[-\te,\te]\), an erasure is declared, i.e., \(y=\que\).
Otherwise, the hard decision is given by \(y=\psi(\tilde{y})\in\{0,1\}\).
The syndrome of each CN is then computed by treating the erased bits as zero, i.e., erased bits do not contribute to the syndrome.

In addition to initializing the data and syndrome buffers, \(\bm{D}\) is initialized as described in Sec.~\ref{sec:DRSini}, whereas all entries of \(\bm{D}_{\text{CN}}\) are set to a code-dependent initial value \(d_{0}\).
Decoding then proceeds iteratively.
In each iteration, all CNs decode their associated bits using \ac{EaED} with miscorrection detection.
After each CN decision, the bits in the identified error-location set are flipped, the corresponding syndrome buffers of the two involved component codes are updated, and the affected entries of \(\bm{D}\) and \(\bm{D}_{\text{CN}}\) are updated.
This process is repeated until all syndromes are zero and no erasures remain, or until the maximum number of iterations \(L\) is reached.

For \ac{SC} \acp{GPC}, we employ windowed \acs{RDRSD}, whose routine is similar to that for block \acp{GPC} within a decoding window. The only difference is that the thresholds \(\ta\) and \(\tee\) should be configured based on the effective CN decoding count, determined by the iteration index and the CN position within the window. Algorithm~\ref{alg:DRSDwindow} summarizes the windowed \acs{RDRSD} algorithm.

\subsection{Complexity Analysis}
\begin{figure}
    \centering
    \begin{tikzpicture}
\pgfplotsset{grid style={dashed, gray}}
\pgfplotsset{every tick label/.append style={font=\footnotesize}}

\begin{axis}[
    anchor=north west,
    yshift=-0.35cm,
    width=8cm,
    height=6.4cm,
    xmin=4,
    xmax=6.5,
    ymin=0,
    ymax=20,
    xtick={3.5,4,4.5,5,5.5,6,6.5,7},
    ytick={0,2,...,200},
    xlabel={$\Eb/\No$ (dB)},
    ylabel={BDD executions per CN},
    axis background/.style={fill=white},
    xmajorgrids,
    ymajorgrids,
    yminorgrids,
    minor y tick num=1,
    legend style={
    at={((1,1)},
    anchor=north east,
    draw=none,
    fill opacity=0.7,
    text opacity=1,
    legend columns=1,
    row sep=0pt,
    font=\footnotesize
}
]

    \addlegendentry{rDRSD $\mathcal{J}=1$}
    \addlegendentry{rDRSD $\mathcal{J}=2$}
    \addlegendentry{rDRSD $\mathcal{J}=3$}

\addplot [color=magenta, line width=0.7pt, mark=triangle, mark options={fill=white, mark size=1pt}]table[x=EbNo_dB,y=normalizedSumCalls, col sep=semicolon,row sep=crcr] {
EbNo_dB;simulatedFrames;FER;BER;normalizedSumCalls\\ 
3.4;12800;1;0.0309752;11.9414;\\ %
3.5;12800;1;0.0291807;13.2331;\\ %
3.6;12800;1;0.0269423;14.5479;\\ %
3.7;12800;1;0.0240721;15.8276;\\ %
3.8;12800;1;0.0199355;16.9003;\\ %
3.9;12800;0.988125;0.0133617;17.2419;\\ %
4;12800;0.627266;0.00401237;14.8492;\\ %
4.1;12800;0.0455469;0.000130583;10.1084;\\ %
4.2;12800;7.8125e-05;1.88351e-07;7.23315;\\ %
4.3;12800;0;0;5.41505;\\ %
4.4;12800;0;0;4.30662;\\ %
4.5;12800;0;0;3.57783;\\ %
4.6;12800;0;0;3.08136;\\ %
4.7;12800;0;0;2.72544;\\ %
4.8;12800;0;0;2.4507;\\ %
4.9;12800;0;0;2.23329;\\ %
5;12800;0;0;2.04906;\\ %
5.1;12800;0;0;1.89008;\\ %
5.2;12800;0;0;1.75214;\\ %
5.3;12800;0;0;1.62663;\\ %
5.4;12800;0;0;1.51085;\\ %
5.5;12800;0;0;1.40487;\\ %
5.6;12800;0;0;1.31234;\\ %
5.7;12800;0;0;1.2192;\\ %
5.8;12800;0;0;1.13165;\\ %
5.9;12800;0;0;1.04933;\\ %
6;12800;0;0;0.979027;\\ %
6.1;12800;0;0;0.908851;\\ %
6.2;12800;0;0;0.843847;\\ %
6.3;12800;0;0;0.78903;\\ %
6.4;12800;0;0;0.738983;\\ %
6.5;12800;0;0;0.688727;\\ %
6.6;12800;0;0;0.645135;\\ %
6.7;12800;0;0;0.605307;\\ %
6.8;12800;0;0;0.566669;\\ %
6.9;12800;0;0;0.530174;\\ %
7;12800;0;0;0.495808;\\ %
 }; %

    \addplot [color=magenta!50!black, line width=0.7pt, mark=square, mark options={fill=white, mark size=1pt}]table[x=EbNo_dB,y=normalizedSumCalls, col sep=semicolon,row sep=crcr] {
        EbNo_dB;simulatedFrames;FER;BER;normalizedSumCalls\\
        3.4;19200;1;0.0340455;23.5009;\\ %
        3.5;19200;1;0.0320035;26.01;\\ %
        3.6;19200;1;0.02894;28.3457;\\ %
        3.7;19200;1;0.0246101;30.3445;\\ %
        3.8;19200;0.997344;0.0180635;31.2483;\\ %
        3.9;19200;0.80724;0.00776974;27.1351;\\ %
        4;19200;0.170573;0.000782628;17.5382;\\ %
        4.1;19200;0.00260417;6.53982e-06;11.3701;\\ %
        4.2;19200;0;0;8.57133;\\ %
        4.3;19200;0;0;6.89769;\\ %
        4.4;19200;0;0;5.83232;\\ %
        4.5;19200;0;0;5.09662;\\ %
        4.6;19200;0;0;4.55314;\\ %
        4.7;19200;0;0;4.12108;\\ %
        4.8;19200;0;0;3.76764;\\ %
        4.9;19200;0;0;3.46856;\\ %
        5;19200;0;0;3.20157;\\ %
        5.1;19200;0;0;2.96574;\\ %
        5.2;19200;0;0;2.74866;\\ %
        5.3;19200;0;0;2.55086;\\ %
        5.4;19200;0;0;2.36723;\\ %
        5.5;19200;0;0;2.19722;\\ %
        5.6;19200;0;0;2.05071;\\ %
        5.7;19200;0;0;1.90497;\\ %
        5.8;19200;0;0;1.7732;\\ %
        5.9;19200;0;0;1.65348;\\ %
        6;19200;0;0;1.55064;\\ %
        6.1;19200;0;0;1.44895;\\ %
        6.2;19200;0;0;1.35391;\\ %
        6.3;19200;0;0;1.27103;\\ %
        6.4;19200;0;0;1.19201;\\ %
        6.5;19200;0;0;1.10957;\\ %
        6.6;19200;0;0;1.03568;\\ %
        6.7;19200;0;0;0.965262;\\ %
        6.8;19200;0;0;0.895089;\\ %
        6.9;19200;0;0;0.829231;\\ %
        7;19200;0;0;0.765121;\\ %
    };%

    \addplot [color=blue!60!red, line width=0.7pt, mark=o, mark options={fill=white, mark size=1pt}]table[x=EbNo_dB,y=normalizedSumCalls, col sep=semicolon,row sep=crcr] {
        EbNo_dB;simulatedFrames;FER;BER;normalizedSumCalls\\
        3.4;19200;1;0.0357657;44.1581;\\ %
        3.5;19200;1;0.0323668;48.2623;\\ %
        3.6;19200;1;0.0280597;51.5954;\\ %
        3.7;19200;1;0.0227119;52.2372;\\ %
        3.8;19200;0.995521;0.015747;47.5465;\\ %
        3.9;19200;0.769427;0.00640134;34.2624;\\ %
        4;19200;0.142083;0.000618138;20.1567;\\ %
        4.1;19200;0.00208333;5.20905e-06;13.8733;\\ %
        4.2;19200;0;0;11.1851;\\ %
        4.3;19200;5.20833e-05;7.15256e-09;9.46414;\\ %
        4.4;19200;0;0;8.27209;\\ %
        4.5;19200;0;0;7.3808;\\ %
        4.6;19200;0;0;6.6685;\\ %
        4.7;19200;0;0;6.0686;\\ %
        4.8;19200;0;0;5.558;\\ %
        4.9;19200;0;0;5.11063;\\ %
        5;19200;0;0;4.71236;\\ %
        5.1;19200;0;0;4.35084;\\ %
        5.2;19200;0;0;4.0228;\\ %
        5.3;19200;0;0;3.72707;\\ %
        5.4;19200;0;0;3.4623;\\ %
        5.5;19200;0;0;3.22419;\\ %
        5.6;19200;0;0;3.02124;\\ %
        5.7;19200;0;0;2.81942;\\ %
        5.8;19200;0;0;2.6331;\\ %
        5.9;19200;0;0;2.45645;\\ %
        6;19200;0;0;2.2987;\\ %
        6.1;19200;0;0;2.13601;\\ %
        6.2;19200;0;0;1.97766;\\ %
        6.3;19200;0;0;1.83561;\\ %
        6.4;19200;0;0;1.6976;\\ %
        6.5;19200;0;0;1.55385;\\ %
        6.6;19200;0;0;1.42738;\\ %
        6.7;19200;0;0;1.30502;\\ %
        6.8;19200;0;0;1.18797;\\ %
        6.9;19200;0;0;1.07779;\\ %
        7;19200;0;0;0.974901;\\ %
    };%

\addplot [color=black, line width=0.7pt, mark=x, mark options={fill=white, mark size=1.5pt}]table[x=EbNo_dB,y=normalizedSumCalls, col sep=semicolon,row sep=crcr] {
EbNo_dB;simulatedFrames;FER;BER;normalizedSumCalls\\ 
3.4;12800;1;0.0293469;16.6304;\\ %
3.5;12800;1;0.0278473;16.5922;\\ %
3.6;12800;1;0.026349;16.5434;\\ %
3.7;12800;1;0.0248132;16.4742;\\ %
3.8;12800;1;0.0232651;16.3871;\\ %
3.9;12800;1;0.0216749;16.2614;\\ %
4;12800;1;0.0200528;16.0965;\\ %
4.1;12800;1;0.018385;15.8611;\\ %
4.2;12800;1;0.0166543;15.5309;\\ %
4.3;12800;1;0.0148062;15.054;\\ %
4.4;12800;1;0.0128794;14.3884;\\ %
4.5;12800;1;0.0108489;13.4291;\\ %
4.6;12800;0.999531;0.00860023;11.9919;\\ %
4.7;12800;0.957578;0.00593931;9.67064;\\ %
4.8;12800;0.605859;0.00258064;5.89693;\\ %
4.9;12800;0.129922;0.000412194;2.66862;\\ %
5;12800;0.00625;1.55342e-05;1.58365;\\ %
5.1;12800;0;0;1.27561;\\ %
5.2;12800;0;0;1.09637;\\ %
5.3;12800;0;0;0.965626;\\ %
5.4;12800;0;0;0.86324;\\ %
5.5;12800;0;0;0.778523;\\ %
5.6;12800;0;0;0.704419;\\ %
5.7;12800;0;0;0.638523;\\ %
5.8;12800;0;0;0.580639;\\ %
5.9;12800;0;0;0.52697;\\ %
6;12800;0;0;0.478463;\\ %
6.1;12800;0;0;0.434807;\\ %
6.2;12800;0;0;0.394392;\\ %
6.3;12800;0;0;0.357437;\\ %
6.4;12800;0;0;0.323845;\\ %
6.5;12800;0;0;0.293194;\\ %
6.6;12800;0;0;0.265595;\\ %
6.7;12800;0;0;0.240007;\\ %
6.8;12800;0;0;0.217223;\\ %
6.9;12800;0;0;0.195699;\\ %
7;12800;0;0;0.176339;\\ %
 }node [pos=0.4,anchor=south,font=\footnotesize,sloped] {iBDD};

\addplot[
    only marks,
    color=red,
    mark=o,
    mark size=3pt,
    mark options={fill=red!20}
]table[x=EbNo_dB,y=normalizedSumCalls, col sep=semicolon,row sep=crcr] {
EbNo_dB;simulatedFrames;FER;BER;normalizedSumCalls\\ 
4.5;12800;0;0;3.57783;\\ 
4.5;19200;0;0;5.09662;\\
4.5;19200;0;0;7.3808;\\
5.2;12800;0;0;1.09637;\\ %
};

\end{axis}

\end{tikzpicture}
    \caption{\ac{BDD} executions per CN in a PC with $[256,239,6]$ eBCH component code when decoding using iBDD and \acs{RDRSD} with different $\mathcal{J}$ values.}
    \label{fig:BDDruns}
\end{figure}
\subsubsection{Computational Complexity}

\begin{figure*}[!t]
    \centering
    \setlength{\tabcolsep}{0pt}
    \begin{tabular}{@{}ccc@{}}
        \multicolumn{3}{c}{\begin{tikzpicture}
\begin{axis}[
    hide axis,
    scale only axis,
    width=0pt,
    height=0pt,
    xmin=0, xmax=1,
    ymin=0, ymax=1,
    legend columns=6,
    legend style={
        draw=black,
        at={(0.5,0)},
        anchor=center,
        /tikz/every even column/.append style={column sep=0.4cm},
        font = \footnotesize
    },
]

    \addlegendimage{color=KITpalegreen, dashed, line width=0.7pt, mark=triangle, mark options={ solid, mark size=1pt}}
    \addlegendentry{SCC iBDD}

    \addlegendimage{color=red, dashed, line width=0.7pt, mark=triangle, mark options={ solid, mark size=1pt}}
    \addlegendentry{PC iBDD}

        \addlegendimage{color=KITpalegreen, line width=0.7pt, mark=none, mark options={ solid, mark size=1pt}}
    \addlegendentry{SCC rDRSD}

    \addlegendimage{color=red, line width=0.7pt, mark=none, mark options={ solid, mark size=1pt}}
    \addlegendentry{PC rDRSD}
    
    \addlegendimage{color=black!80, line width=0.7pt, mark=square, mark options={ solid, mark size=1.5pt}}
    \addlegendentry{$\mathcal{J}=1$}
    
     \addlegendimage{color=black!80, line width=0.7pt, mark=o, mark options={ solid, mark size=1.5pt}}
    \addlegendentry{$\mathcal{J}=3$}

\end{axis}
\end{tikzpicture}} \\[-0.5ex]

        \begin{tikzpicture}

    \begin{axis}[%
        xshift=1.5cm,
        xmin=3.3,
        xmax=5.3,
        ymode=log,
        ymin=1e-11,
        ymax=0.001,
        yminorticks=true,
        axis background/.style={fill=white, mark size=1.5pt},
    xmajorgrids,
    xminorgrids,
    ymajorgrids,
    yminorgrids,
    width=6cm,
    height=5cm,
    xtick={3.0,3.5,...,10.5},
    minor x tick num=4,
    minor grid style={gray!25},%
 ytick={0.1,0.01,0.001,1e-4,1e-5,1e-6,1e-7,1e-8,1e-9,1e-10,1e-11,1e-12,1e-13,1e-14,1e-15},
    xlabel={$\Eb/\No$ (dB)},
    ylabel={Post-FEC BER},
    label style={font=\small},
    legend cell align={left},
    legend style={anchor = south west,at={(0,1.05)}, draw=black, fill opacity=0.7, text opacity = 1,legend columns=4,font=\footnotesize, row sep = 0pt}
        ]

        \addplot [color=orange!20!red, dashed, line width=0.7pt, mark=triangle, mark options={solid, mark size=1pt}]table[x=EbNo,y=BER, col sep=semicolon,row sep=crcr] {
        EbNo;EsNo;errorProb;noiseVar;ErasureProb;decodedFrame;FrameErr;FER;BER\\
            4.5;3.40829;0.0181402;0.228108;0;40140;318;0.00792227;3.64618e-05\\
            4.6;3.50829;0.0170868;0.222916;0;238650;301;0.00126126;3.80756e-06\\
            4.7;3.60829;0.016075;0.217842;0;1.12696e+06;301;0.00026709;4.57068e-07\\
            4.8;3.70829;0.015104;0.212883;0;3.66072e+06;301;8.22243e-05;7.46904e-08\\
            4.9;3.80829;0.0141736;0.208037;0;8.63055e+06;301;3.48761e-05;2.35485e-08\\
            5;3.90829;0.013283;0.203302;0;1.91484e+07;301;1.57193e-05;9.36071e-09\\
            5.1;4.00829;0.0124318;0.198674;0;3.60929e+07;301;8.33959e-06;4.9816e-09\\
            5.2;4.10829;0.0116191;0.194152;0;7.34748e+07;301;4.09664e-06;2.35006e-09\\
            5.3;4.21737;0.0107759;0.189336;0;1.65022e+08;301;1.824e-06;1.03635e-09\\
            5.4;4.31737;0.0100416;0.185026;0;2.31354e+08;187;8.08284e-07;4.47961e-10\\
            5.5;4.41737;0.00934371;0.180814;0;3.34416e+08;151;4.51533e-07;2.50772e-10\\
        };

        \addplot [color=KITpalegreen, dashed, line width=0.7pt, mark=triangle, mark options={ solid, mark size=1pt}]table[x=EbNo,y=BER, col sep=semicolon,row sep=crcr] {
            EbNo;        EsNo;       delta; ErasureProb;  totalFrame;    FE;         FER;         BER;      SPsize;          BE;  throughput;    mis rate\\
            4;     2.84016;   0.0249275;           0;         425;   425;           1;   0.0155009;     5079.33; 2.15872e+06;  5.1066e+08;           0\\%
            4.2;     3.04016;   0.0223808;           0;         429;   429;           1;  0.00466352;     1528.14;      655573; 2.84293e+08;           0\\%
            4.4;     3.24016;   0.0200056;           0;         679;   191;    0.281296; 8.08019e-05;     94.1257;       17978; 1.21346e+08;           0\\%
            4.6;     3.44016;   0.0177999;           0;        8400;    42;       0.005; 2.78291e-07;     18.2381;         766; 1.02628e+08;           0\\%
            4.8;     3.64016;   0.0157611;           0;       58800;    40; 0.000680272; 1.99817e-08;       9.625;         385;  9.1298e+07;           0\\%
            5;     3.84016;   0.0138854;           0;      384000;    40; 0.000104167; 2.95639e-09;         9.3;         372; 8.87341e+07;           0\\%
            5.2;     4.04016;   0.0121686;           0;      982661;    34; 3.45999e-05; 9.78266e-10;     9.26471;         315; 9.33949e+07;           0\\%
            5.4;     4.24016;   0.0106053;           0; 2.78706e+06;    22; 7.89362e-06; 2.16805e-10;           9;         198; 1.07179e+08;           0\\%
            5.6;     4.44016;  0.00918963;           0; 2.84623e+06;     5; 1.75671e-06; 4.82495e-11;           9;          45; 1.18366e+08;           0\\%
        };

   \addplot [color=orange!20!red, line width=0.7pt, mark=o, mark options={ solid, mark size=1pt}]table[x=EbNo,y=BER, col sep=semicolon,row sep=crcr] {
            EbNo;        EsNo;       delta; ErasureProb;  totalFrame;    FE;         FER;         BER;      SPsize;          BE;  throughput;    mis rate\\
            3.3;     2.21737;   0.0163461;   0.0485185;         840;    58;   0.0690476; 0.000565156;     134.103;        7778; 9.65403e+07;  0.00158539\\%
            3.35;     2.26737;     0.01585;   0.0479146;        2160;    51;   0.0236111; 0.000171619;     119.088;      6073.5; 2.23521e+08;  0.00107182\\%
            3.4;     2.31737;   0.0153641;   0.0473089;        7320;    52;  0.00710383; 3.88349e-05;     89.5673;      4657.5; 1.66641e+08; 0.000782025\\%
            3.45;     2.36737;   0.0148883;   0.0467014;       29520;    50;  0.00169377; 7.18486e-06;        69.5;        3475; 1.67929e+08; 0.000606194\\%
            3.45;     2.36737;   0.0148883;   0.0467014;       29520;    50;  0.00169377; 7.18486e-06;        69.5;        3475; 1.67929e+08; 0.000606194\\%
            3.5;     2.41737;   0.0144227;   0.0460925;      130920;    50; 0.000381913;  1.0373e-06;        44.5;        2225; 1.86105e+08; 0.000509687\\%
            3.55;     2.46737;    0.013967;   0.0454821;      417360;    50; 0.000119801; 2.66086e-07;       36.39;      1819.5; 1.89693e+08; 0.000428549\\%
            3.6;     2.51737;   0.0135213;   0.0448706; 1.00646e+06;    50; 4.96789e-05; 4.48153e-08;       14.78;         739; 1.89663e+08; 0.000368085\\%
            3.65;     2.56737;   0.0130855;    0.044258;  2.8378e+06;    50; 1.76193e-05; 1.66686e-08;        15.5;         775; 1.91227e+08; 0.000321787\\%
            3.7;     2.61737;   0.0126594;   0.0436446; 3.86573e+06;    50; 1.29342e-05; 7.29441e-09;        9.24;         462; 1.93416e+08; 0.000283873\\%
            3.75;     2.66737;   0.0122431;   0.0430306; 5.35647e+06;    50;  9.3345e-06; 5.64036e-09;         9.9;         495; 1.96009e+08; 0.000254516\\%
            3.8;     2.71737;   0.0118364;    0.042416; 8.99083e+06;    50; 5.56122e-06; 3.38072e-09;        9.96;         498; 1.98005e+08; 0.000230759\\%
            3.85;     2.76737;   0.0114392;   0.0418012; 1.19664e+07;    50; 4.17836e-06; 2.35645e-09;        9.24;         462; 1.99263e+08; 0.000209566\\%
            3.9;     2.81737;   0.0110514;   0.0411862; 2.04557e+07;    50; 2.44431e-06; 1.41431e-09;        9.48;         474; 2.00994e+08;  0.00019233\\%
            3.95;     2.86737;    0.010673;   0.0405713; 2.31048e+07;    50; 2.16405e-06; 1.22045e-09;        9.24;         462; 2.02347e+08; 0.000177206\\%
        };

        \addplot [color=orange!20!red, line width=0.7pt, mark=square, mark options={ solid, mark size=1pt}]table[x=EbNo,y=BER, col sep=semicolon,row sep=crcr] {
  EbNo;        EsNo;       delta; ErasureProb;  totalFrame;    FE;         FER;         BER;      SPsize;          BE;  throughput;    mis rate\\
   3.1;     2.01737;   0.0192013;   0.0479464;       19200; 17990;    0.936979;   0.0167113;     292.213; 5.25692e+06; 2.43731e+08;  0.00379894\\ %
   3.2;     2.11737;   0.0181074;   0.0468149;       19200; 13078;    0.681146;  0.00771644;     185.608; 2.42738e+06; 2.79907e+08;  0.00244778\\ %
   3.3;     2.21737;   0.0170554;   0.0456745;       19200;  5350;    0.278646;  0.00195537;     114.973;      615108; 3.71087e+08;  0.00138815\\ %
   3.4;     2.31737;   0.0160448;   0.0445264;       19200;   931;   0.0484896; 0.000215694;     72.8802;     67851.5; 4.65616e+08; 0.000788755\\ %
   3.5;     2.41737;   0.0150751;   0.0433719;       76800;   250;  0.00325521; 9.26097e-06;      46.612;       11653; 4.73012e+08; 0.000497705\\ %
   3.6;     2.51737;   0.0141459;   0.0422124;   1.536e+06;   203; 0.000132161; 2.30392e-07;     28.5616;        5798; 4.76964e+08; 0.000352214\\ %
   3.7;     2.61737;   0.0132565;   0.0410491; 1.66464e+07;   200; 1.20146e-05; 7.92895e-09;     10.8125;      2162.5; 4.80378e+08; 0.000260859\\ %
   3.9;     2.81737;   0.0115949;   0.0387167; 9.89184e+07;   200; 2.02187e-06; 1.13471e-09;       9.195;        1839; 4.95328e+08; 0.000164959\\ %
        };

        \addplot [color=KITpalegreen, line width=0.7pt, mark=square, mark options={ solid, mark size=1pt}]table[x=EbNo,y=BER, col sep=semicolon,row sep=crcr] {
  EbNo;        EsNo;       delta; ErasureProb;  totalFrame;    FE;         FER;         BER;      SPsize;          BE;  throughput;    mis rate\\
   3.1;     1.94016;   0.0176805;   0.0583726;       19200; 19200;           1;   0.0276403;     9057.18; 1.73898e+08; 2.61518e+08;           0\\ %
   3.2;     2.04016;   0.0166451;   0.0570635;       19200; 19126;    0.996146;   0.0114083;     3752.73; 7.17747e+07; 2.91395e+08;           0\\ %
   3.3;     2.14016;   0.0156509;   0.0557427;       19200; 10551;    0.549531; 0.000578761;      345.11; 3.64125e+06; 3.46363e+08;           0\\ %
   3.4;     2.24016;   0.0146976;   0.0544115;       19200;   901;   0.0469271; 9.78843e-06;     68.3502;     61583.5; 3.72155e+08;           0\\ %
   3.5;     2.34016;   0.0137844;   0.0530713;       96000;   245;  0.00255208; 2.17724e-07;     27.9551;        6849; 3.79921e+08;           0\\ %
   3.6;     2.44016;   0.0129109;   0.0517237;      576000;   203; 0.000352431; 1.24878e-08;     11.6108;        2357; 3.86677e+08;           0\\ %
   3.7;     2.54016;   0.0120763;   0.0503701;  1.4592e+06;   204; 0.000139803; 4.20265e-09;     9.85049;      2009.5; 3.93727e+08;           0\\ %
     3.8;     2.64016;   0.0112802;    0.049012;  3.2256e+06;   200;  6.2004e-05; 1.80895e-09;        9.56;        1912; 4.04733e+08;           0\\ %
        };

    \end{axis}

\node[
    anchor=south east,
    font=\footnotesize,
    draw=black,
    fill=white,
    inner sep=2pt,
    yshift=2pt
] at (current axis.north east) {(a) $[128,2]$};

\end{tikzpicture} &
        \begin{tikzpicture}

    \begin{axis}[%
        xshift=1.5cm,
        xmin=3.9,
        xmax=5.4,
        ymode=log,
        ymin=1e-11,
        ymax=0.001,
        yminorticks=true,
        axis background/.style={fill=white, mark size=1.5pt},
    xmajorgrids,
    xminorgrids,
    ymajorgrids,
    yminorgrids,
    width=6cm,
    height=5cm,
    xtick={3.0,3.2,...,10.6},
    minor x tick num=1,
    minor grid style={gray!25},%
ytick={0.1,0.01,0.001,1e-4,1e-5,1e-6,1e-7,1e-8,1e-9,1e-10,1e-11,1e-12,1e-13,1e-14,1e-15},
    xlabel={$\Eb/\No$ (dB)},
    ylabel style={white},
    legend cell align={left},
    legend style={anchor = south west,at={(0,1.05)}, draw=black, fill opacity=1, text opacity = 1,legend columns=3,font=\footnotesize, row sep = 0pt},
    yticklabels=\empty
        ]

        \addplot [color=orange!20!red, dashed, line width=0.7pt, mark=triangle, mark options={solid,mark size=1pt}]table[x=EbNo,y=BER, col sep=semicolon,row sep=crcr] {
        EbNo;EsNo;errorProb;noiseVar;ErasureProb;decodedFrame;FrameErr;FER;BER\\
            4.9;4.30074;0.0101612;0.185736;0;1221;229;0.187551;0.000409091\\
            5;4.40074;0.00945732;0.181508;0;14785;202;0.0136625;1.78948e-05\\
            5.1;4.50074;0.00878902;0.177377;0;3.42077e+06;501;0.000146458;1.61791e-07\\
            5.2;4.60074;0.00815547;0.173339;0;8.68736e+06;201;2.31371e-05;4.39727e-09\\
            5.3;4.70074;0.00755577;0.169393;0;2.37323e+07;201;8.46946e-06;1.24093e-09\\
            5.4;4.80074;0.00698897;0.165538;0;2.3e+07;79;3.43478e-06;4.89444e-10\\
            5.5;4.90074;0.00645414;0.161769;0;1.56355e+08;240;1.53497e-06;2.17468e-10\\
            5.6;5.00074;0.00595027;0.158087;0;4.29515e+08;290;6.75181e-07;9.4847e-11\\
        };

        \addplot [color=KITpalegreen, dashed, line width=0.7pt, mark=triangle, mark options={solid,mark size=1pt}]table[x=EbNo,y=BER, col sep=semicolon,row sep=crcr] {
            EbNo;        EsNo;       delta; ErasureProb;  totalFrame;    FE;         FER;         BER;      SPsize;          BE;  throughput;    mis rate\\
            4.7;     4.08113;    0.011836;           0;         417;   417;           1;  0.00585435;     7673.42; 3.19982e+06; 7.60264e+08;           0\\%
            4.8;     4.18113;   0.0110511;           0;         422;   421;     0.99763;  0.00177283;      2329.2;      980593; 3.03809e+08;           0\\%
            4.9;     4.28113;   0.0103035;           0;         838;   234;    0.279236; 3.87688e-05;     181.979;       42583; 1.29647e+08;           0\\%
            5;     4.38113;  0.00959249;           0;       10439;    43;  0.00411917; 1.16571e-07;      37.093;        1595; 1.05979e+08;           0\\%
            5.1;     4.48113;  0.00891728;           0;      112800;    40;  0.00035461; 2.52284e-09;       9.325;         373; 1.00274e+08;           0\\%
            5.2;     4.58113;  0.00827699;           0;      283200;    40; 0.000141243; 1.01294e-09;         9.4;         376; 9.74792e+07;           0\\%
            5.3;     4.68113;  0.00767072;           0;      559285;    25; 4.46999e-05; 3.19207e-10;        9.36;         234; 9.47855e+07;           0\\%
            5.4;     4.78113;  0.00709755;           0;      752819;    17; 2.25818e-05; 1.73299e-10;     10.0588;         171; 9.24872e+07;           0\\%
        };

        \addplot [color=orange!20!red, line width=0.7pt, mark=square, mark options={solid,mark size=1pt}]table[x=EbNo,y=BER, col sep=semicolon,row sep=crcr] {%
            EbNo;        EsNo;       delta; ErasureProb;  totalFrame;    FE;         FER;         BER;      SPsize;          BE\\
            4.05;     3.45316;  0.00869831;   0.0248508;         307;    93;    0.302932; 0.000687143;     148.656;       13825\\
            4.1;     3.50316;  0.00837965;   0.0243882;         763;    42;   0.0550459; 9.54424e-05;     113.631;      4772.5\\
            4.15;     3.55316;  0.00806954;   0.0239277;        3337;    32;  0.00958945; 4.18394e-06;     28.5938;         915\\
            4.2;     3.60316;  0.00776787;   0.0234695;       83493;    31; 0.000371289; 1.54702e-07;     27.3065;       846.5\\
            4.25;     3.65316;   0.0074745;   0.0230137; 2.31514e+06;    31; 1.33901e-05; 2.99885e-09;     14.6774;         455\\
            4.3;     3.70316;  0.00718933;   0.0225605; 3.04622e+07;    31; 1.01766e-06; 1.73816e-10;     11.1935;         347\\
            4.4;     3.80316;   0.0066431;   0.0216623;   5.194e+07;    31; 5.96842e-07; 8.28452e-11;     9.09677;         282\\
            4.5;     3.90316;   0.0061282;   0.0207759; 1.27926e+08;    31; 2.42328e-07; 3.32787e-11;           9;         279\\
            4.6;     4.00316;  0.00564362;   0.0199023;     9.4e+06;     1; 1.06383e-07; 1.46095e-11;           9;           9\\ %
        };

        \addplot [color=orange!20!red, line width=0.7pt, mark=o, mark options={solid,mark size=1pt}]table[x=EbNo,y=BER, col sep=semicolon,row sep=crcr] {%
  EbNo;        EsNo;       delta; ErasureProb;  totalFrame;    FE;         FER;         BER;      SPsize;          BE;  throughput;    mis rate\\
   3.7;     3.10316;   0.0106194;   0.0303733;       19200; 19200;           1;   0.0204586;     1340.77; 2.57428e+07; 4.04424e+08;  0.00550635\\ %
  3.75;     3.15316;   0.0102515;   0.0298662;       19200; 19198;    0.999896;   0.0181008;     1186.38; 2.27761e+07;  4.0134e+08;  0.00539567\\ %
   3.8;     3.20316;  0.00989283;   0.0293603;       19200; 19080;     0.99375;   0.0148233;     977.571; 1.86521e+07;  3.9447e+08;  0.00497045\\ %
  3.85;     3.25316;  0.00954314;   0.0288557;       19200; 17897;    0.932135;   0.0102068;     717.616; 1.28432e+07; 4.00411e+08;   0.0039242\\ %
   3.9;     3.30316;  0.00920238;   0.0283525;       19200; 13326;    0.694063;  0.00501819;     473.836; 6.31434e+06; 4.17537e+08;   0.0024268\\ %
  3.95;     3.35316;  0.00887045;    0.027851;       19200;  5929;    0.308802;  0.00146521;     310.957; 1.84366e+06; 4.43641e+08;  0.00120002\\ %
     4;     3.40316;  0.00854723;   0.0273513;       19200;  1365;   0.0710937; 0.000230654;     212.623;      290230; 4.61547e+08; 0.000603549\\ %
  4.05;     3.45316;  0.00823261;   0.0268536;       38400;   226;  0.00588542; 1.36391e-05;     151.876;       34324; 4.75113e+08; 0.000343469\\ %
   4.1;     3.50316;  0.00792649;    0.026358;      768000;   202; 0.000263021; 5.09342e-07;     126.911;       25636; 4.92798e+08; 0.000239754\\ %
  4.15;     3.55316;  0.00762874;   0.0258646; 3.45408e+07;   200; 5.79025e-06; 8.27861e-09;        93.7;       18740; 5.02583e+08; 0.000180059\\ %
      4.225;     3.62816;  0.00719756;    0.025129; 8.13696e+07;   200; 2.45792e-06; 3.46545e-10;        9.24;        1848; 5.15572e+08; 0.000125514\\ %
     4.25;     3.65316;  0.00705789;   0.0248851; 4.99392e+07;    73; 1.46178e-06; 2.24425e-10;     10.0616;       734.5; 5.23802e+08; 0.000113084\\ %
};

        \addplot [color=KITpalegreen, line width=0.7pt, mark=o, mark options={mark size=1pt}]table[x=EbNo,y=BER, col sep=semicolon,row sep=crcr] {
  EbNo;        EsNo;       delta; ErasureProb;  totalFrame;    FE;         FER;         BER;      SPsize;          BE;  throughput;    mis rate\\
   3.7;     3.08113;  0.00927709;   0.0373591;        3840;  3840;           1;   0.0252435;     33087.2; 1.27055e+08; 3.78324e+08;           0\\ %
  3.75;     3.13113;  0.00894321;   0.0367555;        3840;  3840;           1;   0.0239401;     31378.8; 1.20495e+08; 3.78313e+08;           0\\ %
   3.8;     3.18113;  0.00861807;    0.036153;        3840;  3840;           1;     0.02247;     29451.9; 1.13095e+08; 3.66487e+08;           0\\ %
  3.85;     3.23113;  0.00830156;    0.035552;        3840;  3840;           1;   0.0206105;     27014.5; 1.03736e+08; 3.61381e+08;           0\\ %
   3.9;     3.28113;  0.00799356;   0.0349525;        3840;  3840;           1;   0.0173991;     22805.3; 8.75723e+07; 3.47605e+08;           0\\ %
  3.95;     3.33113;  0.00769396;   0.0343547;        3840;  3708;    0.965625;   0.0110883;       15051; 5.58092e+07; 3.44457e+08;           0\\ %
     4;     3.38113;  0.00740265;   0.0337588;        3840;  1509;    0.392969;  0.00141301;        4713; 7.11192e+06; 3.84268e+08;           0\\ %
  4.05;     3.43113;  0.00711951;    0.033165;        7680;   363;   0.0472656; 1.85721e-05;     515.021;      186952; 4.08448e+08;           0\\ %
   4.1;     3.48113;  0.00684441;   0.0325734;       61440;   212;  0.00345052; 2.47514e-07;     94.0212;     19932.5; 4.01796e+08;           0\\ %
  4.15;     3.53113;  0.00657724;   0.0319843;      602880;   202; 0.000335058; 5.38024e-09;      21.047;      4251.5;  4.2163e+08;           0\\ %
   4.2;     3.58113;  0.00631786;   0.0313978;  1.4592e+06;   200; 0.000137061; 9.64654e-10;       9.225;        1845; 4.30108e+08;           0\\ %
        };

    \end{axis}

\node[
    anchor=south east,
    font=\footnotesize,
    draw=black,
    fill=white,
    inner sep=2pt,
    yshift=2pt
] at (current axis.north east) {(b) $[256,2]$};

\end{tikzpicture} &
        \input{numerical/T2_511} \\
        \begin{tikzpicture}

    \begin{axis}[%
        xshift=1.5cm,
        xmin=3.3,
        xmax=5.3,
        ymode=log,
        ymin=1e-11,
        ymax=0.001,
        yminorticks=true,
        axis background/.style={fill=white, mark size=1.5pt},
    xmajorgrids,
    xminorgrids,
    ymajorgrids,
    yminorgrids,
    width=6cm,
    height=5cm,
        xtick={3.0,3.5,...,10.5},
    minor x tick num=4,
    minor grid style={gray!25},%
 ytick={0.1,0.01,0.001,1e-4,1e-5,1e-6,1e-7,1e-8,1e-9,1e-10,1e-11,1e-12,1e-13,1e-14,1e-15},
    xlabel={$\Eb/\No$ (dB)},
    ylabel={Post-FEC BER},
    label style={font=\small},
    legend cell align={left},
    legend style={anchor = south west,at={(0,1.05)}, draw=black, fill opacity=0.7, text opacity = 1,legend columns=4,font=\footnotesize, row sep = 0pt},
        ]

        \addplot [color=orange!20!red, dashed, line width=0.7pt, mark=triangle, mark options={solid, mark size=1pt}]table[x=EbNo,y=BER, col sep=semicolon,row sep=crcr] {
            EbNo;EsNo; delta; ErasureProb;totalFrame;FE; FER; BER;SPsize;BE;throughput;mis rate\\
            3.8; 2.94144; 0.0236164; 0; 355; 355; 1; 0.0246849; 1605.14;569823;6.6038e+09; 0\\
            4; 3.14144; 0.0211567; 0; 355; 355; 1; 0.0203842; 1325.48;470546; 6.39672e+09; 0\\
            4.2; 3.34144; 0.0188676; 0; 355; 355; 1; 0.0149838; 974.324;345885;7.1296e+09; 0\\
            4.4; 3.54144; 0.0167468; 0; 515; 292; 0.56699;0.00509719; 584.568;170694; 8.67444e+09; 0\\
            4.5; 3.64144; 0.0157486; 0;1573; 137; 0.0870947; 0.000621386; 463.927; 63558;1.5516e+10; 0\\
            4.6; 3.74144; 0.0147911; 0; 48779; 102;0.00209106; 1.35634e-05; 421.775; 43021; 2.42567e+10; 0\\
            4.65; 3.79144; 0.0143276; 0;882183; 101; 0.000114489; 6.68503e-07; 379.683; 38348; 2.53622e+10; 0\\
            4.7; 3.84144;0.013874; 0;1.9404e+07; 100; 5.15358e-06; 2.94941e-08;372.14; 37214; 2.58426e+10; 0\\
            4.75; 3.89144; 0.0134303; 0; 8.86229e+08; 100; 1.12838e-07; 6.23249e-10;359.16; 35916; 2.61823e+10; 0\\
            4.8; 3.94144; 0.0129965; 0;2.13e+09; 4; 1.87793e-09; 1.00286e-11;347.25;1389; 2.648e+10; 0\\ %
        };

        \addplot [color=KITpalegreen, dashed, line width=0.7pt, mark=triangle, mark options={ solid, mark size=1pt}]table[x=EbNo,y=BER, col sep=semicolon,row sep=crcr] {
            EbNo;        EsNo;       delta; ErasureProb;  totalFrame;    FE;         FER;         BER;      SPsize;          BE;  throughput;    mis rate\\
            4.2;     3.25627;   0.0198216;           0;       12800; 12800;           1;   0.0122669;     16078.4; 2.05804e+08; 2.97892e+09;           0\\ %
            4.3;     3.35627;   0.0187046;           0;       12800;  8916;    0.696562;  0.00187548;     3529.08; 3.14653e+07; 6.52312e+08;           0\\ %
            4.4;     3.45627;   0.0176295;           0;       12800;    50;  0.00390625; 8.15094e-07;       273.5;       13675; 4.55931e+08;           0\\ %
            4.45;     3.50627;   0.0171076;           0;      532800;    40; 7.50751e-05; 8.37401e-09;       146.2;        5848;  3.1695e+08;           0\\ %
            4.49;     3.54627;   0.0166976;           0;  2.6002e+06;     6; 2.30751e-06; 5.45753e-11;          31;         186; 3.15552e+08;           0\\ %
        };

       \addplot [color=orange!20!red, line width=0.7pt, mark=square, mark options={mark size=1pt}]table[x=EbNo,y=BER, col sep=semicolon,row sep=crcr] {
  EbNo;        EsNo;       delta; ErasureProb;  totalFrame;    FE;         FER;         BER;      SPsize;          BE;  throughput;    mis rate\\
   3.3;     2.40744;   0.0207803;   0.0243624;        3840;  3840;           1;   0.0292895;     1919.51; 7.37093e+06;  2.8893e+08;  0.00178702\\ %
   3.4;     2.50744;   0.0196281;   0.0236814;        3840;  3840;           1;   0.0260622;     1708.02; 6.55878e+06; 3.01956e+08;   0.0016195\\ %
   3.5;     2.60744;   0.0185182;   0.0229985;        3840;  3840;           1;   0.0217421;     1424.89; 5.47159e+06;  3.9742e+08;  0.00131326\\ %
   3.6;     2.70744;   0.0174503;   0.0223146;        3840;  3724;    0.969792;   0.0149566;     1010.73; 3.76394e+06; 4.36737e+08; 0.000873657\\ %
   3.7;     2.80744;    0.016424;   0.0216303;        3840;  1494;    0.389062;  0.00325361;     548.058;      818798; 4.66417e+08; 0.000366426\\ %
   3.8;     2.90744;   0.0154388;   0.0209467;       23040;   227;  0.00985243; 5.30415e-05;     352.819;       80090; 4.91596e+08;  0.00013353\\ %
 3.825;     2.93244;   0.0151989;    0.020776;       76800;   206;  0.00268229; 1.02234e-05;     249.786;       51456; 4.96305e+08; 0.000104036\\ %
  3.85;     2.95744;   0.0149615;   0.0206054;      441600;   202; 0.000457428; 1.51935e-06;     217.678;       43971; 5.21924e+08; 8.64084e-05\\ %
 3.875;     2.98244;   0.0147266;   0.0204349;  3.4176e+06;   200; 5.85206e-05; 1.79318e-07;     200.815;       40163; 4.77215e+08; 7.01555e-05\\ %
   3.9;     3.00744;   0.0144943;   0.0202646; 2.56128e+07;   200;  7.8086e-06; 1.94673e-08;     163.385;       32677; 5.26845e+08; 6.01738e-05\\ %
    3.925;     3.03244;   0.0142644;   0.0200944;  8.6784e+07;    45; 5.18529e-07; 1.60326e-09;     202.633;      9118.5; 5.26546e+08; 5.07481e-05\\ %
       };
        \addplot [color=orange!20!red, line width=0.7pt, mark=o, mark options={ solid, mark size=1pt}]table[x=EbNo,y=BER, col sep=semicolon,row sep=crcr] {
            EbNo;        EsNo;       delta; ErasureProb;  totalFrame;    FE;         FER;         BER;      SPsize;          BE;  throughput;    mis rate\\
            3.4;     2.54144;   0.0161274;    0.033491;         446;   445;    0.997758;   0.0276424;     1801.49;      801662; 1.71936e+08;   0.0186278\\%
            3.45;     2.59144;   0.0156357;   0.0330147;         468;   439;    0.938034;   0.0219195;     1519.47;      667046; 2.14294e+08;   0.0161414\\%
            3.5;     2.64144;   0.0151543;   0.0325379;         618;   445;    0.720065;   0.0123248;     1112.98;      495276; 2.25144e+08;   0.0101318\\%
            3.55;     2.69144;    0.014683;   0.0320607;         928;   291;    0.313578;  0.00404454;     838.696;      244060; 2.63798e+08;  0.00425879\\%
            3.6;     2.74144;   0.0142217;   0.0315833;        1200;    77;   0.0641667; 0.000653274;     662.013;       50975; 2.97856e+08;  0.00136025\\%
            3.65;     2.79144;   0.0137704;   0.0311059;       15600;    56;  0.00358974; 3.52384e-05;     638.312;     35745.5; 3.11185e+08; 0.000539083\\%
            3.7;     2.84144;   0.0133291;   0.0306285;      328800;    50; 0.000152068; 1.23088e-06;      526.33;     26316.5; 3.19828e+08; 0.000351809\\%
            3.75;     2.89144;   0.0128975;   0.0301514; 4.50995e+07;    50; 1.10866e-06; 6.81734e-09;      399.85;     19992.5; 1.06303e+09; 0.000264272\\ %
            3.78;     2.92144;   0.0126433;   0.0298653; 3.23251e+08;    12; 3.71228e-08; 1.50171e-10;     263.042;      3156.5; 1.06414e+09; 0.000231599\\
        };

        \addplot [color=KITpalegreen, line width=0.7pt, mark=o, mark options={ solid, mark size=1pt}]table[x=EbNo,y=BER, col sep=semicolon,row sep=crcr] {
            EbNo;        EsNo;       delta; ErasureProb;  totalFrame;    FE;         FER;         BER;      SPsize;          BE;  throughput;    mis rate\\
            3.5;     2.55627;   0.0128632;    0.045815;       12800;   153;   0.0119531; 0.000234243;     25685.9; 3.92994e+06; 1.03259e+09;           0\\ %
            3.55;     2.60627;   0.0124422;   0.0451852;       89600;    51; 0.000569196; 1.12788e-05;     25972.3; 1.32459e+06; 1.09348e+09;           0\\ %
            3.6;     2.65627;   0.0120309;   0.0445546;  1.0496e+06;    20; 1.90549e-05; 3.52108e-07;     24220.3;      484406; 1.03564e+09;           0\\ %
            3.63;     2.68627;   0.0117886;    0.044176;  6.8325e+06;     2; 2.92719e-07; 6.01229e-09;     26921.5;       53843; 6.22843e+08;           0\\ %
            3.65;     2.70627;   0.0116291;   0.0439235;  2.0625e+07;     2; 9.69697e-08;  1.7321e-10;     2341.25;      4682.5; 6.25558e+08;           0\\ %
        };

    \end{axis}

\node[
    anchor=south east,
    font=\footnotesize,
    draw=black,
    fill=white,
    inner sep=2pt,
    yshift=2pt
] at (current axis.north east) {(d) $[256,3]$};

\end{tikzpicture} &
        \input{numerical/T3_511} &
        \input{numerical/T3_1023} \\
    \end{tabular}
    \caption{Post-FEC BER performance of the \acs{RDRSD} for \acp{GPC} based on eBCH component codes with different component-code lengths \(n\) and error-correction capabilities~\(t\). The entries of \(\bm{D}\) and \(\bm{D}_{\text{CN}}\) use \(q=4\) bits. Dashed and solid curves correspond to \ac{iBDD} and \acs{RDRSD}, respectively, colors distinguish \acp{PC} and staircase codes, and markers indicate \(\mathcal{J}\).}
    \label{fig:T2codes}
\end{figure*}

The initialization complexity of the \acs{RDRSD} is $\mathcal{O}(N)$, as in iBDD. During iterative decoding, compared with \ac{iBDD}, the additional operations are limited to simple integer comparisons, additions, increments/decrements, and clipping operations on \(q\)-bit entries of \(\bm{D}\) and \(\bm{D}_{\text{CN}}\).
In particular, the \ac{DRS}-based miscorrection detection only requires checking whether the \(D_i\) value of any flipped position exceeds \(\ta\), and optionally accumulating the \(D_i\) values over the flipped positions and comparing the result with \(\tee\).
Since the number of flipped positions is bounded by the correction capability $t$ of the component code, this additional complexity is small compared with algebraic component decoding. Therefore, the computational complexity of the \acs{RDRSD} is dominated by the component-code \ac{EaED} operations, similarly to syndrome-domain \ac{iBDD}.

In Fig.~\ref{fig:BDDruns}, we compare the number of \ac{BDD} executions per CN for the \acs{RDRSD} and \ac{iBDD} on the $[256,239,6]$ eBCH-code-based PC across different $\Eb/\No$ values. In the waterfall region, \ac{iBDD} requires approximately $1.3$ to $16.6$ \ac{BDD} executions per CN, while at high \ac{SNR}, most frames are decoded within a few iterations and the number of \ac{BDD} executions per CN approaches $1$. In contrast, the \acs{RDRSD} performs more \ac{BDD} executions in the low- and high-\ac{SNR} regions due to its list-based \ac{EaED} operations, but fewer in the moderate-\ac{SNR} region because it can terminate earlier than \ac{iBDD}, which often uses all $20$ iterations after failing to converge. A practical comparison should therefore focus on the operating regions where the post-decoding
\ac{BER} of each decoder is low. For example, Fig.~\ref{fig:T2codes}\nobreakdash-(b) shows that the target post-decoding \ac{BER} of $10^{-9}$ is reached above approximately $\SI{5.2}{dB}$ for \ac{iBDD} and $\SI{4.5}{dB}$ for the \acs{RDRSD}. At high \ac{SNR}, e.g., for $\Eb/\No\geq\SI{4.8}{dB}$, the \acs{RDRSD} requires roughly $\mathcal{J}$ times as many \ac{BDD} executions as \ac{iBDD}, which is still below the worst-case list-size estimate because most erasures are resolved in the first few iterations and later \acs{RDRSD} iterations mainly process CNs with $E=0$, requiring only one \ac{BDD} step.

\subsubsection{Memory}
The \acs{RDRSD} requires a ternary data buffer of $2N$ bits, a syndrome buffer of $btM$ bits, a bit-\ac{DRS} buffer storing \(\bm{D}\) with $qN$ bits, and a CN-\ac{DRS} buffer storing \(\bm{D}_{\text{CN}}\) with $qM$ bits.
For a block \ac{GPC} with \(N\) VNs, \(M\) CNs, and component BCH code with error-correcting capability $t$, the total memory is approximately
\(
    M_{\text{rDRSD}}\approx (q+2)N + (q+bt)M  \quad \text{bits}.
\)

\subsubsection{Internal Decoder Data Flow}

The \acs{RDRSD} also preserves the low-data-flow advantage of syndrome-domain \ac{iBDD}. After initialization, it operates only on the ternary data buffer, the syndrome buffers, and the quantized \(q\)-bit entries of \(\bm{D}\) and \(\bm{D}_{\text{CN}}\). Thus, unlike \ac{SDD} algorithms, which pass multi-bit soft messages along all graph edges in every iteration, the \acs{RDRSD} yields only a limited additional internal data flow. During each \ac{CN} decoding step, this consists of reading and updating the \(bt\)-bit syndrome, reading the $E$ erased positions, updating the \(E\) erasures with binary values after the \ac{EaED} step, flipping up to \(t\) additional data bits, and sending decrease signals for up to \(2\mathcal{J}t\) entries of \(\bm{D}\).

\section{Numerical Results of the Waterfall Region}\label{sec:waterfallResults}

This section presents post-decoding \ac{BER} results for selected \acp{PC} and staircase codes with various component codes.
The decoder parameters are optimized using Optuna~\cite{optuna_2019} with the tree-structured Parzen estimator (TPE) sampler, a sequential model-based optimization method.
For each considered code and decoder configuration, we define a search space for \(\te\), $d_0$, \(\ta\), \(\tee\), and their iteration-dependent update schedules.
Each Optuna trial corresponds to one parameter configuration of the proposed \acs{RDRSD}.
The decoder is then instantiated and evaluated by Monte-Carlo simulation at representative \(\Eb/\No\) values in the waterfall region.
The resulting post-\ac{FEC} \ac{BER} is used as the objective value to be minimized.
Based on completed trials, Optuna builds probabilistic surrogate models of promising and non-promising parameter regions and samples new candidates preferentially from the promising regions.

Fig.~\ref{fig:T2codes} shows the post-\ac{FEC} \ac{BER} performance of \acp{GPC} based on eBCH component codes with \(t=2\) and \(t=3\). Tab.~\ref{tab:compare} further summarizes, for \acp{PC}, the decoding gain in \(\Eb/\No\) at a target \ac{BER} of \(10^{-8}\) relative to the \ac{iBDD} algorithm. For the PC comparisons in Tab.~\ref{tab:compare}, all decoders use 20 iterations. The \acs{RDRSD} uses \(q=4\)-bit entries for \(\bm{D}\) and \(\bm{D}_{\text{CN}}\), whereas conventional \ac{DRSD} uses \(q=5\). Despite the lower-precision reliability scores and reduced internal data flow, the \acs{RDRSD} achieves larger gains than conventional \ac{DRSD}. For \acp{PC} with large block length \(n\) and \(t=3\), the list decoding option gives a particularly noticeable improvement.
Similar trends are observed for staircase codes. To keep the figure readable, we therefore show only \acs{RDRSD} with \(\mathcal{J}=3\), except for the \([128,2]\)-based staircase code, where \(\mathcal{J}=1\) and \(\mathcal{J}=3\) perform similarly.

For $t=2$ codes, the \acs{RDRSD} achieves a small reduction in the error floor compared to \ac{iBDD}, but the resulting floors may still be too high for some \ac{FEC} applications. For codes with $t=3$, no error floors are observed for either \ac{iBDD} or the \acs{RDRSD} within the simulated range. Estimates in~\cite{staircaseCode} indicate that the error floors of such GPCs are below $10^{-20}$ and thus negligible for most practical applications. Since many systems still use $t=2$ component codes due to their low decoding complexity, we study their error-floor behavior in detail and propose new post-processing techniques to lower the error floors in the next sections.

\begin{table}[!t]
    \caption{Decoding gain with respect to \ac{iBDD} at a target post-\ac{FEC} \ac{BER} of \(10^{-8}\) for PCs with different \([n,t]\) eBCH component codes, where \(n\) is the component-code length and \(t\) is the error-correcting capability.}
    \label{tab:compare}
    \centering
    \begin{tabular}{@{}l|ccccc@{}}
        \toprule
        \diagbox{Decoder}{\([n,t]\)}
        & \([128,2]\) & \([256,2]\) & \([512,2]\) & \([256,3]\) & \([512,3]\) \\
        \midrule
        \acs{RDRSD} ($\mathcal{J}=3$)
        & \(1.30\) & \(0.95\) & \(0.85\) & \(0.96\) & \(0.79\) \\
        \ac{DRSD}~\cite{miao2022JLT}
        & \(1.14\) & \(0.89\) & \(0.69\) & \(0.62\) & \(0.51\) \\
        \bottomrule
    \end{tabular}

    \vspace{2pt}
    {\footnotesize All gains are given in \si{dB}.}
\end{table}

\section{Error Floor Analysis}\label{sec:floor}

\subsection{Stall Patterns}
In this section, we define error-only stall patterns, which are the dominant mechanism behind error floors in iterative decoding of GPCs. An error floor is the high-\ac{SNR} regime in which the steep waterfall transition turns into a slow decrease of the post-decoding error rate. The shape of an error-only stall pattern depends on the GPC structure. For clarity, we use PCs to present the theoretical and simulation results.
\begin{definition}
An error-only stall pattern $\mathcal{S}$ is defined as a set of erroneous bit positions such that each involved \ac{CN} contains at least $t+1$ erroneous bits.
\end{definition}

For an error-only stall pattern in a PC, we define \(K\) and \(L\) as the numbers of involved rows and columns, respectively. Specifically, an involved row or column contains at least one error in the stall pattern. By ignoring positions outside the stall pattern, we can represent the stall pattern on a \(K\times L\) grid. For PCs, it is easy to see that the smallest error-only stall pattern involving \(K\) rows and \(L\) columns has size
\begin{equation}\label{eq:smallestS}
    |\mathcal{S}|_{\min} = \max\{K,L\}\cdot (t+1).
\end{equation}

A minimum error-only stall pattern is defined as the case $K=L=t+1$. For example, Fig.~\ref{fig:minimumSP} depicts a minimum stall pattern involving three row CNs and three column CNs.

\begin{figure}
    \centering
        \begin{tikzpicture}[scale=0.3]

        \tikzset{
  cross/.pic={
    \draw[line width=0.9pt, black] (-0.08,-0.08) -- (0.08,0.08);
    \draw[line width=0.9pt, black] (-0.08,0.08) -- (0.08,-0.08);
  }
}

  \draw[step=1cm,thick, black!90] (0,0) grid (7,6);

  \foreach \c in {0,2,6} {
    \foreach \r in {1,4,5} {
       \pic at (\c+0.5,\r+0.5) {cross};
    }
  }

  \draw[step=1cm,thick, black!90] (18,1) grid (21,4);

  \foreach \r in {1,2,3} {
    \foreach \c in {18,19,20} {
       \pic at (\c+0.5,\r+0.5) {cross};
    }
  }
  \node[draw=none] at (12,3.5) {$\rightarrow$};
\draw[decorate, decoration={brace, amplitude=2pt}]
  (18,4.2) -- (21,4.2)
  node[midway, yshift=10pt] {$L=3$};

\draw[decorate, decoration={brace, amplitude=2pt}]
   (21.2,4)--(21.2,1)
  node[midway, xshift=12pt, rotate=90] {$K=3$};

   \node[draw=none] at (3.5,-1) {\footnotesize (a) Minimum error-only};
   \node[draw=none] at (3.5,-2.0) {\footnotesize stall pattern};
    \node[draw=none] at (20,-1) {\footnotesize (b) Same pattern};
    \node[draw=none] at (20,-2.0) {\footnotesize with involved CNs};
\end{tikzpicture}

    \caption{A minimum error-only stall pattern for a PC based on $t=2$ component codes.}
    \label{fig:minimumSP}
\end{figure}

\subsection{Union Bound of Miscorrection-Free Decoding}

The error floor can be approximated by enumerating possible stall patterns and summing their contributions to the post-decoding \ac{BER}, assuming that a sufficient number of iterations has been performed to remove all errors outside the stall patterns. In this section, we first review the union-bound method for \ac{iBDD} proposed in~\cite{staircaseCode} and then extend this approach to miscorrection-free \ac{EaE} decoding.

\subsubsection{For iBDD}

For fixed \(K\) and \(L\), the \ac{BER} contribution of stall patterns involving \(K\) rows and \(L\) columns is approximated by
\begin{equation}\label{eq:iBDDfloor}
    \sum_{l=(t+1)\cdot \max \{K,L\}}^{KL}\binom{n}{K}\binom{n}{L}\frac{l}{n^2}M_{K,L}^l \cdot \delta^l,
\end{equation}
where \(\delta\) denotes the crossover probability of the \ac{BSC}, and $M_{K,L}^l$ is the number of stall patterns of size $l$ that can be arranged within $K$ fixed rows and $L$ fixed columns. To estimate the overall error floor, \eqref{eq:iBDDfloor} is evaluated over a range of values of $K$ and $L$, and the resulting contributions are accumulated. In practice, a typical choice is $t+1\leq K,L \leq 2t+3$, since larger values of $K$ and $L$ usually lead to negligible contributions due to the factor $\delta^l$.

In~\cite{staircaseCode}, an upper bound on \(M_{K,L}^l\) was proposed, while~\cite{holzbaur2019improved} provides both an exact computation method and an approximation.
Since the exact computation of \(M_{K,L}^l\) is computationally intensive, we use the sampling-based approximation in~\cite[eq.~(13)]{holzbaur2019improved}, which was shown to be close to the exact values.
The resulting union bound for the \ac{PC} with \([256,239,6]\) eBCH component code is shown by the \(\te=0\) union-bound curve in Fig.~\ref{fig:floor255codeUB}.

\subsubsection{For Ideal \acs{iEaED}}
For \ac{EaE} decoding described in Sec.~\ref{sec:EaED} followed by perfect miscorrection detection, we consider three types of stall patterns.

\begin{figure}
    \centering
    \input{figures/floor255codeUB}
    \caption{Comparison of simulated \acs{RDRSD}, ideal \acs{iEaED}, and the computed union bound for different erasure thresholds $\te$ for the product code with \([256,239,6]\) eBCH component code.}
    \label{fig:floor255codeUB}
\end{figure}

Type-I stall patterns contain errors whose positions already form an error-only stall pattern and may also contain erasures.
That is, the errors are located within \(K\) involved rows and \(L\) involved columns, and each involved row and column contains at least \(t+1\) errors, irrespective of the erasures inside the \(K\times L\) grid.
Such stall patterns are therefore dominated by errors.
An example is shown in Fig.~\ref{fig:EaEstallPatternReverse}-(a).

We next consider the effect of erasures within a Type-I stall pattern.
For fixed \(K\) and \(L\), the number of erasures is upper bounded by \(KL-\max\{K,L\}(t+1)\) according to~\eqref{eq:smallestS}.
Since the error floor is dominated by stall patterns with small \(K\) and \(L\), and each erasure contributes on average only half an error, the additional contribution due to erasures is small.
Consequently, the resulting error rate remains of the same order as that obtained when erasures inside the stall pattern are neglected.

Type-II stall patterns contain erasures and errors whose error positions alone do not form a stall pattern.
However, for each involved row and column, if \(U\) and \(E\) denote the number of errors and erasures, respectively, then
\(
    2U+E\geq \dmin .
\)
An example is shown in Fig.~\ref{fig:EaEstallPatternReverse}-(b) for $t=2$. For algebraic \ac{EaE} decoding, such patterns will give a non-negligible contribution to the error floor that is difficult to estimate. However, for the \ac{EaED} considered in this work, such patterns are less critical. As shown in~\cite{miao2022JLT}, a miscorrection-free ideal \ac{EaED} step resolves two errors and two erasures with probability $p=\frac{1}{2}$. Moreover, for values of \(U\) and \(E\) such that \(2U+E\) exceeds $\dmin$ only slightly, e.g., by $1$, $2$, or $3$, the probability $p$ remains close to $\frac{1}{2}$.
Since a stall pattern can be recovered with high probability once any of its involved CNs is recovered, repeated independent random trials across iterations make the survival probability \((1-p)^m\) decrease rapidly with the iteration number \(m\).

Type-III stall patterns consist only of erasures, e.g., as shown in Fig.~\ref{fig:EaEstallPatternReverse}-(c).
For pure erasures, the smallest stall pattern has size \(\dmin\times\dmin\), since each involved row and column must contain at least \(\dmin\) erasures.
Assuming independent erasures with probability \(\epsilonC=0.02\), which corresponds to PC decoding in the error-floor region, the contribution of the smallest Type-III stall patterns for the \([256,239,6]\)-eBCH code-based \ac{PC} is approximated by
\[
    \binom{256}{6}^2 \epsilonC^{36}
    \frac{36}{256^2}
    \approx 5.1\cdot 10^{-42}
\]
erasures per bit.
This contribution is negligible compared with the error-dominated stall-pattern contribution.

Under this miscorrection-free, error-dominated approximation, the error floor contribution for stall patterns on a $K\times L$ grid for ideal \acs{iEaED} is approximated as
\begin{equation}\label{eq:EaEfloor}
    \sum_{l=(t+1)\cdot \max \{K,L\}}^{KL}\binom{n}{K}\binom{n}{L}\frac{l}{n^2}M_{K,L}^l \cdot \deltaC^l.
\end{equation}
Compared with the ideal iBDD case in~\eqref{eq:iBDDfloor}, the only difference in the approximation is to replace the crossover probability $\delta$ of the BSC by the error probability $\deltaC$ of the EaE channel computed by~\eqref{eq:EaEprob}.

\begin{figure}
    \centering
    \begin{tikzpicture}[scale=0.4]
          \tikzset{
              cross/.pic={
                  \draw[line width=0.9pt, black] (-0.08,-0.08) -- (0.08,0.08);
                  \draw[line width=0.9pt, black] (-0.08,0.08) -- (0.08,-0.08);
              }
          }

          \tikzset{
              qmark/.pic={
                  \node[
                      inner sep=0pt,
                      font=\sffamily\bfseries\small,
                      text=black,
                  ] at (0,0) {?};
              }
          }

          \draw[step=1cm,thick, black!90] (0,0) grid (4,3);
          \foreach \x/\y in {0/0, 0/1, 0/2, 1/1,1/2,1/3, 2/0,2/2,2/3} {
              \pic at (\y+0.5,\x+0.5) {cross};
          }
          \foreach \x/\y in {0/3, 1/0, 2/1} {
              \pic at (\y+0.5,\x+0.5) {qmark};
          }

          \pgfmathsetmacro{\secondx}{6}

          \draw[step=1cm,thick, black!90] (\secondx,0) grid (\secondx+5,5);
          \foreach \x/\y in {0/0, 0/1, 1/1,1/2,2/2,2/3,3/3,3/4,4/4,4/0} {
              \pic at (\y+0.5+\secondx,\x+0.5) {cross};
          }

          \draw[step=1cm,thick, black!90] (\secondx,0) grid (\secondx+5,5);
          \foreach \x/\y in {0/2, 0/4, 1/0,1/3,2/1,2/4,3/2,3/0,4/3,4/1} {
              \pic at  (\y+0.5+\secondx,\x+0.5) {qmark};
          }

          \pgfmathsetmacro{\thirdx}{13}

          \draw[step=1cm,thick, black!90] (\thirdx,0) grid (\thirdx+6,6);
          \foreach \x in {0,1,...,5} {
              \foreach \y in {0,1,...,5} {
                  \pic at  (\y+0.5+\thirdx,\x+0.5) {qmark};
              }
          }

          \node[draw=none] at (2,-1) {\footnotesize (a) Type I};
          \node[draw=none] at (\secondx+3,-1) {\footnotesize (b) Type II};
          \node[draw=none] at (\thirdx+3,-1) {\footnotesize (c) Type III};

    \end{tikzpicture}
    \caption{Examples of Type-I, Type-II, and Type-III stall patterns for \ac{EaE} decoding of a PC with $t=2$. Cross marks indicate errors, and question marks indicate erasures.}
    \label{fig:EaEstallPatternReverse}
\end{figure}

Fig.~\ref{fig:floor255codeUB} shows the union bound computed for different erasure thresholds \(\te\).
First, increasing \(\te\) lowers the error floor because the error probability
\(
    \deltaC = Q\left(\frac{\te+1}{\sigma}\right)
\)
of the EaE channel decreases with \(\te\), while the contribution of pure-erasure stall patterns remains negligible in the considered range of $\Eb/\No$.
Second, the performance of ideal \acs{iEaED} converges to the computed error-floor bound at high $\Eb/\No$.
Finally, the \acs{RDRSD} follows the behavior of ideal \acs{iEaED} closely, particularly for small \(\te\), where miscorrection detection is almost perfect.

\subsection{Error Floor with Miscorrections}
In~\cite{staircaseCode}, miscorrections are accounted for by replacing \(\delta\) in~\eqref{eq:iBDDfloor} with \(\delta+\zeta\), where \(\zeta\) denotes the per-bit miscorrection probability estimated by simulation under the assumption that miscorrections occur independently as Bernoulli events for each bit.
For the \acs{RDRSD}, however, this correction is not sufficiently accurate.
The miscorrection probability of the \acs{RDRSD} is typically much smaller than the input error and erasure probabilities.
For example, for the \([256,239,6]\) eBCH-code-based \ac{PC} at \(\Eb/\No=\SI{4.35}{dB}\), our numerical simulations estimate that the average number of miscorrected bits accumulated over \(20\) decoding iterations, normalized per code bit, is \(1.3\cdot 10^{-4}\). This is only in the order of \(1\%\) of the input error and erasure probabilities, \(\deltaC\) and \(\epsilonC\), respectively.
Such a small offset has almost no effect on the union bound computed from~\eqref{eq:EaEfloor}.
Nevertheless, as shown for the \(\te=0.12\) case in Fig.~\ref{fig:floor255codeUB}, the \acs{RDRSD} exhibits a noticeably higher error floor than the computed bound.
This suggests that the remaining miscorrections cannot be modeled as independent random errors.
To the best of our knowledge, no accurate analytical method is available for this case; hence, Monte-Carlo simulations are required to estimate the true error floor, while the miscorrection-free bound serves as a useful reference for guiding the simulation design.

\section{Stall Pattern Removal via Post-Processing}\label{sec:spr}

Many small error-only stall patterns can be identified from the remaining nonzero CN syndromes after iterative decoding. For example, with a $t=2$ component code, a minimum error-only stall pattern can be detected from the intersection of three failed row CNs and three failed column CNs. Conventional post-processing methods for such patterns can be broadly classified into flip-and-iterate and erase-and-iterate approaches. In both cases, rows and columns with nonzero syndromes are first marked. Then, selected bits at their intersections are flipped or erased, followed by additional \ac{iBDD} or iterative algebraic \ac{EaE} decoding iterations.

Flip-and-iterate post-processing has been applied to braided codes in~\cite{Jian2013} and to OFEC codes in~\cite{rapp2021error,lagendijk2025lowering}, while erase-and-iterate post-processing has been studied for \acp{HPC} in~\cite{emmadi2015half} and product codes in~\cite{condo2016stall}. In this section, we examine the success and failure conditions of both techniques and then propose a soft-aided post-processing method that resolves certain failure cases not addressed by existing approaches.

We evaluate the \([256,239,6]\) eBCH-code-based PC because this component code is widely used, e.g., in OFEC~\cite{openroadmOpenROADMMSA2021}, and its comparatively high PC error floor permits feasible Monte Carlo simulation.

\subsection{Correction Guarantee of Conventional Post-Processing Approaches}

Theorem~\ref{theorem:smallKL} unifies guarantees previously proven for separate cases~\cite{holzbaur2019improved,Jian2013,hager2018approaching} and characterizes the stall patterns corrected by both approaches.

\begin{theorem}\label{theorem:smallKL}
Consider a product code with eBCH component codes of minimum distance
\(\dmin=2t+2\). Assume that no errors exist outside the error-only stall pattern, that
the error-only stall pattern is fully identified by \(K\) failed row CNs and \(L\) failed
column CNs, and that no undetectable component-code errors are present.
If \(L\leq 2t+1\), then both flip-and-iterate and erase-and-iterate
post-processing resolve the stall pattern when the subsequent decoding
iteration starts with row decoding. Similarly, if \(K\leq 2t+1\), both methods
resolve the stall pattern when the subsequent decoding iteration starts with
column decoding.
\end{theorem}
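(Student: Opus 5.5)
The plan is a direct counting argument on the $K\times L$ grid of intersections of the failed rows and columns, treating the two post-processing variants separately. By symmetry (transpose the product code) it suffices to treat the case $L\leq 2t+1$ with decoding starting from the rows.

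\emph{Preliminary facts from the hypotheses.} First, since no errors exist outside the error-only stall pattern and no undetectable component-code errors are present, a row or column contains an error if and only if its syndrome is nonzero. So the $K$ failed rows and $L$ failed columns are exactly the involved rows and columns. Second, every erroneous bit lies in one of the $KL$ intersection positions. Consequently every row outside the $K$ failed rows is error-free. Each failed row $i$ contains $u_i$ errors, all inside the $L$ failed columns, with $t+1\leq u_i\leq L$, since it is a CN of the stall pattern.

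\emph{Flip-and-iterate.} All $KL$ intersection bits are flipped. In failed row $i$, the $u_i$ wrong intersection bits become correct and the $L-u_i$ correct ones become wrong. The row therefore carries exactly $L-u_i\leq (2t+1)-(t+1)=t$ errors, and still none outside the failed columns. Since $\dmin=2t+2>2t$, the transmitted row codeword is the unique codeword within distance $t$. Hence \ac{BDD} returns it, with no miscorrection possible. Unflagged rows are already codewords and are untouched. After one pass of row decoding every row equals the transmitted codeword, so all column syndromes vanish and the stall pattern is resolved.

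\emph{Erase-and-iterate.} All $KL$ intersection bits are erased. Failed row $i$ then has $U=0$ errors and $E=L\leq 2t+1<\dmin$ erasures. For Forney's algebraic \ac{EaE} decoder, $E+2U<\dmin$ holds, so decoding succeeds. For the \ac{EaED} of Algorithm~\ref{alg:eaed}, I would argue as follows.
\begin{itemize}
\item[(i)] In any complementary filling pair, one of the two test patterns agrees with the transmitted word on at least $\lceil L/2\rceil$ erased positions. It therefore has at most $\lfloor L/2\rfloor\leq t$ errors, so \ac{BDD} outputs the transmitted codeword $\bm{x}$, which has $\dnE{\bm{y}}(\bm{y},\bm{x})=0$.
\item[(ii)] Any other codeword $\bm{c}\neq\bm{x}$ differs from $\bm{x}$ in at least $2t+2>L$ positions. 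So $\bm{c}$ differs from $\bm{y}$ on some non-erased position, i.e., $\dnE{\bm{y}}(\bm{y},\bm{c})\geq 1$.
\end{itemize}
Thus the argmin over the candidate list is uniquely $\bm{x}$, and no tie-breaking is involved. The same reasoning shows that $\bm{x}$ conflicts with no non-erased bit. Any DRS-based miscorrection check, if enabled, cannot reject $\bm{x}$, because the set of flipped non-erased positions is empty. Again all rows are corrected in the first row pass.

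\emph{Main obstacle.} The step I expect to need the most care is the bookkeeping in the preliminary facts. It must be justified that the failed rows and columns contain exactly the stall-pattern errors, with no errors elsewhere. It must also be justified that no correct bit outside the grid is touched, so that each failed row's error count after flipping is exactly $L-u_i$. This is precisely where the hypotheses ``no errors outside'', ``fully identified'' and ``no undetectable component-code errors'' are consumed. The role of the decoding order is then transparent: the bound $L\leq 2t+1$ controls the per-row quantities $L-u_i$ and $E=L$, so row decoding must come first. The column statement for $K\leq 2t+1$ follows by exchanging rows and columns.
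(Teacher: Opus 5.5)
Your proposal is correct and follows the paper's proof. After flipping, each involved row carries $L-u_i\le t$ errors and is corrected by one row pass; after erasing, each involved row has $L<\dmin$ erasures and no errors, so it is uniquely recoverable. Your explicit bookkeeping of the hypotheses and your check that the list-based EaED of Algorithm~\ref{alg:eaed} also returns $\bm{x}$ uniquely (the paper only invokes algebraic erasure decoding) are sound refinements of the same argument, not a different route.
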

\begin{proof}
    Let \(r_i\) denote the number of errors in the \(i\)-th involved row and
    \(s_j\) the number of errors in the \(j\)-th involved column.
    From the definition of error-only stall patterns,
    \(r_i\geq t+1\) and \(s_j\geq t+1\).

    First consider the flip-and-iterate method.
    If \(L\leq 2t+1\), then after flipping all \(K L\) intersection bits, the
    number of errors in the \(i\)-th involved row becomes \(L-r_i\).
    Since \(r_i\geq t+1\), we have
    \[
        L-r_i \leq L-(t+1) \leq t .
    \]
    Hence, all involved rows are correctable by one row-decoding half-iteration.
    The case for \(K\leq 2t+1\) follows analogously.

    Now consider the erase-and-iterate method.
    If \(L\leq 2t+1=\dmin-1\), then after erasing all intersection bits, each
    involved row contains at most \(L<\dmin\) erasures and no residual errors in
    the marked positions.
    Therefore, each involved row is uniquely recoverable by algebraic erasure
    decoding.
    Analogously, if \(K\leq 2t+1=\dmin-1\), each involved column contains fewer
    than \(\dmin\) erasures and is recoverable.
\end{proof}

\begin{figure}
    \centering
    \begin{tikzpicture}[scale=0.3]
       \tikzset{
           cross/.pic={
               \draw[line width=0.9pt, black] (-0.08,-0.08) -- (0.08,0.08);
               \draw[line width=0.9pt, black] (-0.08,0.08) -- (0.08,-0.08);
           },
           crossb/.pic={
               \draw[line width=0.9pt, black] (0,0) circle[radius=0.08];
           }
       }

       \draw[step=1cm,thick, black!90] (0,0) grid (7,6);
       \foreach \x/\y in {0/0, 0/1, 0/3,1/0,1/1,1/4,1/6,2/0,2/3,2/4,2/5,3/2,3/4,3/5,3/6,4/1,4/2,4/4,4/5,5/2,5/3,5/6} {
           \pic at (\y+0.5,\x+0.5) {cross};
       }
       \node[draw=none] at (3.5,-1) {\footnotesize A stall pattern with };
       \node[draw=none] at (3.5,-2) {\footnotesize $K=6$, $L=7$, and $|\mathcal{S}|=22$};

       \node[draw=none] at (11,2.5) {$\rightarrow$};
       \node[draw=none] at (11,3.5) {bit flip};

       \draw[step=1cm,thick, black!90] (14,0) grid (21,6);

       \draw[step=1cm,thick, black!90] (0,0) grid (7,6);
       \foreach \x/\y in {0/2, 0/5,0/6,1/2,1/3,1/5,2/1,2/2,2/6,3/0,3/1,3/3,4/0,4/3,4/6,5/0,5/1,5/5} {
           \pic at (\y+14.5,\x+0.5) {cross};
       }

       \foreach \x/\y in {0/4,5/4} {
           \pic at (\y+14.5,\x+0.5) {crossb};
       }
       \node[draw=none] at (17.5,-1) {\footnotesize A stall pattern with };
       \node[draw=none] at (17.5,-2) {\footnotesize $K=6$, $L=6$, and $|\mathcal{S}|=18$};

    \end{tikzpicture}
    \caption{A large stall pattern whose complementary pattern is again a stall pattern of smaller size. A cross represents an error within a stall pattern, while a circle represents an error outside a stall pattern.}
    \label{fig:stallPatternReverse67}
\end{figure}

A direct consequence of Theorem~\ref{theorem:smallKL} is that, for PCs with
eBCH component codes, all error-only stall patterns satisfying
\(
    {|\mathcal{S}| < (t+1)\dmin}
\)
can be resolved by both flip-and-iterate and erase-and-iterate post-processing.
Since each involved row and column contains at least \(t+1\) errors, we have
\[
    |\mathcal{S}| \geq (t+1)\max\{K,L\}.
\]
Together with \( |\mathcal{S}| < (t+1)\dmin \), this implies
\(
    {\max\{K,L\} < \dmin }.
\)
Hence, no undetectable component-code error can be present as an undetectable
error would require at least \(\dmin\) erroneous positions. Since \(\max\{K,L\}<\dmin=2t+2\), we have
\(\max\{K,L\}\leq 2t+1\), and the stall pattern is correctable by
Theorem~\ref{theorem:smallKL}.

When \(\min\{K,L\}\geq 2t+2\), such a guarantee no longer exists.

\newcommand{\Mr}{\mathcal{M}_{\text{r}}}
\newcommand{\Mc}{\mathcal{M}_{\text{c}}}
\newcommand{\Fri}{\mathcal{F}_{\text{r},i}}
\newcommand{\Fci}{\mathcal{F}_{\text{c},i}}

\begin{algorithm}[t]
\footnotesize
\DontPrintSemicolon
\caption{\acs{RDRSD} with Post-Processing for Stall Pattern Removal}
\label{algo:PP}

\KwIn{$\bm{y}\in\ZQO^N$}
\KwOut{$\bm{y}$}

Run regular \acs{RDRSD} iterations and record the set of flipped bits for each CN in
$\mathcal{F}_{\text{r},1},\ldots,\mathcal{F}_{\text{r},n}$ and
$\mathcal{F}_{\text{c},1},\ldots,\mathcal{F}_{\text{c},n}$\;

Run one iBDD iteration and record all CNs that either fail decoding or correct bits
into $\Mr$ and $\Mc$\;

\If{$\Mr=\emptyset$ \textbf{and} $\Mc=\emptyset$}{
    \Return{$\bm{y}$}\;
}

\If{$\Mr=\emptyset$}{
    \For{$i=1,\ldots,n$}{
        \If{$\Fri\subseteq \Mc$}{
            $\Mr \gets \Mr \cup \{i\}$\;
        }
    }
}

\If{$\Mc=\emptyset$}{
    \For{$i=1,\ldots,n$}{
        \If{$\Fci\neq\emptyset$ \textbf{and} $\Fci\subseteq \Mr$}{
            $\Mc \gets \Mc \cup \{i\}$\;
        }
    }
}

\If{
    $\bigl(|\Mr|>\dmin+2 \ \textbf{and}\ |\Mc|\geq\dmin\bigr)$
    \textbf{or}
    $\bigl(|\Mr|\geq\dmin \ \textbf{and}\ |\Mc|>\dmin+2\bigr)$
}{
    \Return{$\bm{y}$}\;
}

\eIf{$|\Mr|<\dmin$ \textbf{or} $|\Mc|<\dmin$}{
    Flip all bits associated with CN pairs in $\Mr\times\Mc$\;
}{
    Draw \(R_i\) uniformly from \(\{0,1,\ldots,2^q-1\}\) for each bit \(i\)
    associated with a CN pair in \(\Mr\times\Mc\)\;
    Erase bit \(i\) if \(R_i\geq 2D_i\)\;
}

Run one \acs{iEaED} iteration, correcting only errors or erasures associated with a CN in $\Mr$ or $\Mc$\;
Run one regular \acs{iEaED} iteration\;

$\Mr\gets\emptyset$, $\Mc\gets\emptyset$\;
Record all row and column CNs with nonzero syndrome into $\Mr$ and $\Mc$, respectively\;

Repeat Lines~13--21 once more as specified above\;

\Return{$\bm{y}$}\;

\end{algorithm}

\begin{figure*}[t]
\centering
\begin{tabular}{@{}c@{\quad}c@{}@{\quad}c@{}@{\quad}c@{}}
        \multicolumn{4}{c}{\begin{tikzpicture}
\begin{axis}[
    hide axis,
    scale only axis,
    width=0pt,
    height=0pt,
    xmin=0, xmax=1,
    ymin=0, ymax=1,
    legend columns=4,
    legend style={
        draw=black,
        at={(0.5,0)},
        anchor=center,
        /tikz/every even column/.append style={column sep=0.4cm},
        font = \footnotesize
    },
]

    \addlegendimage{color=black, line width=1pt, mark=o, mark options={solid,fill=white, mark size=1.5pt}}
    \addlegendentry{rDRSD}

        \addlegendimage{color=KITpalegreen, line width=1pt, mark=o, mark options={ solid, mark size=1pt}}
    \addlegendentry{rDRSD + conventional PP~\cite{holzbaur2019improved}}

    \addlegendimage{color=magenta, line width=1pt, mark=x, mark options={solid,fill=white, mark size=2.5pt}}
    \addlegendentry{rDRSD + proposed PP}

\end{axis}
\end{tikzpicture}} \\[-0.5ex]
		\begin{tikzpicture}
		\pgfplotsset{grid style={dashed, gray}}
		\pgfplotsset{every tick label/.append style={font=\footnotesize}}

		\begin{axis}[%
  xmin=4.1,
  xmax=4.3,
  ymode=log,
  ymin=1e-7,
  ymax=0.1,
  yminorticks=true,
  axis background/.style={fill=white, mark size=1.5pt},
  xmajorgrids,
  xminorgrids,
  ymajorgrids,
  yminorgrids,
  width=4cm,
  height=4cm,
  xtick={3.5,3.7,...,10.5},
  minor x tick num=1,
  minor grid style={gray!25},%
  ytick={1,0.1,0.01,0.001,1e-4,1e-5,1e-6,1e-7,1e-8,1e-9,1e-10,1e-11,1e-12,1e-13,1e-14,1e-15},
  xlabel={$\Eb/\No$ (dB)},
  ylabel={FER},
  label style={font=\small},
  legend cell align={left},
  legend style={{anchor = south west}, draw=none, fill opacity=0.7, text opacity = 1,legend columns=1,font=\footnotesize, row sep = 0pt}
		]

\addplot [color=black, line width=1pt, mark=o, mark options={solid,fill=white, mark size=1.5pt}]table[x=EbNo,y=FER, col sep=semicolon,row sep=crcr] {
  EbNo;        EsNo;       delta; ErasureProb;  totalFrame;    FE;         FER;         BER;      SPsize;          BE;  throughput;    mis rate\\
   4.1;     3.50316;  0.00837965;   0.0243882;       25600;  2008;   0.0784375; 0.000234492;     195.923;      393412; 1.05514e+09; 0.000623453\\ %
   4.2;     3.60316;  0.00776787;   0.0234695;       25600;    57;  0.00222656; 9.06885e-07;      26.693;      1521.5; 1.15613e+09; 0.000248367\\ %
   4.3;     3.70316;  0.00718933;   0.0225605;       51200;    80;   0.0015625; 2.15471e-07;      9.0375;         723; 1.18265e+09; 0.000163651\\ %
   4.4;     3.80316;   0.0066431;   0.0216623;       51200;    65;  0.00126953; 1.74344e-07;           9;         585; 1.13499e+09; 0.000117783\\ %
   4.5;     3.90316;   0.0061282;   0.0207759;       76800;    68; 0.000885417; 1.21593e-07;           9;         612; 1.12618e+09;  8.9432e-05\\ %
   4.6;     4.00316;  0.00564362;   0.0199023;       51200;    54;  0.00105469; 1.44839e-07;           9;         486; 1.11011e+09; 7.10613e-05\\ %
   4.7;     4.10316;  0.00518836;   0.0190426;       76800;    55; 0.000716146; 9.83477e-08;           9;         495; 1.07128e+09; 5.86891e-05\\ %
};%

\addplot [color=KITpalegreen, dashed, line width=1pt, mark=o, mark options={solid,fill=white, mark size=1.5pt}]table[x=EbNo, y=FER, col sep=semicolon,row sep=crcr] {
  EbNo;        EsNo;       delta; ErasureProb;  totalFrame;    FE;         FER;         BER;      SPsize;          BE;  throughput;    mis rate\\
   4.1;     3.50316;  0.00837965;   0.0243882;       25600;  1738;   0.0678906; 0.000238346;     230.079;      399878; 1.13798e+09; 0.000636863\\ %
   4.2;     3.60316;  0.00776787;   0.0234695;      128000;    57; 0.000445313; 8.18372e-07;     120.439;        6865; 1.13871e+09; 0.000246426\\ %
   4.3;     3.70316;  0.00718933;   0.0225605; 1.27053e+08;    50; 3.93537e-07; 1.64895e-10;       27.46;        1373;    1.09e+09; 0.000163608\\ %
};%

\addplot [color=magenta, line width=1pt, mark=x, mark options={solid,fill=white, mark size=2.5pt}]table[x=EbNo, y=FER, col sep=semicolon,row sep=crcr] {
  EbNo;        EsNo;       delta; ErasureProb;  totalFrame;    FE;         FER;         BER;      SPsize;          BE;  throughput;    mis rate\\
   4.1;     3.50316;  0.00837965;   0.0243882;       25600;  1704;   0.0665625;  0.00023631;     232.666;      396463; 1.00645e+09; 0.000631718\\ %
   4.2;     3.60316;  0.00776787;   0.0234695;      179200;    53; 0.000295759; 6.32916e-07;     140.245;        7433; 1.05352e+09; 0.000246319\\ %
   4.3;     3.70316;  0.00718933;   0.0225605; 1.81171e+08;    39; 2.15266e-07; 2.00872e-10;     61.1538;        2385; 1.09925e+09; 0.000163594\\ %
};%

\node[anchor=south west, font=\footnotesize] at (rel axis cs:0,0) {(a) $(3,3)$};
\end{axis}
\end{tikzpicture}&		\begin{tikzpicture}
		\pgfplotsset{grid style={dashed, gray}}
		\pgfplotsset{every tick label/.append style={font=\footnotesize}}

		\begin{axis}[%
  xmin=4.1,
  xmax=4.3,
  ymode=log,
  ymin=1e-6,
  ymax=0.1,
  yminorticks=true,
  axis background/.style={fill=white, mark size=1.5pt},
  xmajorgrids,
  xminorgrids,
  ymajorgrids,
  yminorgrids,
  width=4cm,
  height=4cm,
  xtick={3.5,3.7,...,10.5},
  minor x tick num=1,
  minor grid style={gray!25},%
  ytick={1,0.1,0.01,0.001,1e-4,1e-5,1e-6,1e-7,1e-8,1e-9,1e-10,1e-11,1e-12,1e-13,1e-14,1e-15},
  xlabel={$\Eb/\No$ (dB)},
  label style={font=\small},
  legend cell align={left},
  legend style={{anchor = south west}, draw=none, fill opacity=0.7, text opacity = 1,legend columns=1,font=\footnotesize, row sep = 0pt}
		]

\addplot [color=black, line width=1pt, mark=o, mark options={solid,fill=white, mark size=1.5pt}]table[x=EbNo,y=FER, col sep=semicolon,row sep=crcr] {
  EbNo;        EsNo;       delta; ErasureProb;  totalFrame;    FE;         FER;         BER;      SPsize;          BE;  throughput;    mis rate\\
   4.1;     3.50316;  0.00837965;   0.0243882;       25600;  2377;   0.0928516; 0.000271171;     191.397;      454950; 1.02715e+09;   0.0006548\\ %
   4.2;     3.60316;  0.00776787;   0.0234695;       25600;   189;  0.00738281; 2.17915e-06;     19.3439;        3656;   1.031e+09; 0.000250806\\ %
   4.3;     3.70316;  0.00718933;   0.0225605;       25600;   131;  0.00511719; 7.47442e-07;     9.57252;        1254; 1.09479e+09; 0.000164653\\ %
   4.4;     3.80316;   0.0066431;   0.0216623;       25600;   119;  0.00464844; 6.45518e-07;     9.10084;        1083;  9.6211e+08; 0.000118795\\ %
   4.5;     3.90316;   0.0061282;   0.0207759;       25600;    97;  0.00378906; 5.32866e-07;     9.21649;         894; 1.08664e+09; 9.00197e-05\\ %
   4.6;     4.00316;  0.00564362;   0.0199023;       25600;    88;   0.0034375; 4.84586e-07;     9.23864;         813; 1.08931e+09; 7.11077e-05\\ %
   4.7;     4.10316;  0.00518836;   0.0190426;       25600;    71;  0.00277344; 3.88026e-07;     9.16901;         651;  1.0722e+09; 5.82463e-05\\ %
};%

\addplot [color=KITpalegreen, dashed, line width=1pt, mark=o, mark options={solid,fill=white, mark size=1.5pt}]table[x=EbNo, y=FER, col sep=semicolon,row sep=crcr] {
  EbNo;        EsNo;       delta; ErasureProb;  totalFrame;    FE;         FER;         BER;      SPsize;          BE;  throughput;    mis rate\\
   4.1;     3.50316;  0.00837965;   0.0243882;       25600;  1919;   0.0749609; 0.000259028;      226.46;      434577; 1.08351e+09; 0.000660933\\ %
   4.2;     3.60316;  0.00776787;   0.0234695;      102400;    53; 0.000517578; 9.21637e-07;     116.698;        6185; 1.02646e+09; 0.000248965\\ %
   4.3;     3.70316;  0.00718933;   0.0225605;   2.176e+06;    50; 2.29779e-05; 6.81597e-09;       19.44;         972; 9.87702e+08; 0.000164746\\ %
};%

\addplot [color=magenta, line width=1pt, mark=x, mark options={solid,fill=white, mark size=2.5pt}]table[x=EbNo, y=FER, col sep=semicolon,row sep=crcr] {
  EbNo;        EsNo;       delta; ErasureProb;  totalFrame;    FE;         FER;         BER;      SPsize;          BE;  throughput;    mis rate\\
   4.1;     3.50316;  0.00837965;   0.0243882;       25600;  1862;   0.0727344;  0.00025613;     230.781;      429715; 1.06244e+09;  0.00065086\\ %
   4.2;     3.60316;  0.00776787;   0.0234695;      102400;    53; 0.000517578; 9.31472e-07;     117.943;        6251; 1.12238e+09; 0.000249698\\ %
   4.3;     3.70316;  0.00718933;   0.0225605;  1.2032e+07;    50; 4.15559e-06; 1.67654e-09;       26.44;        1322; 1.10091e+09; 0.000164679\\ %
};%

\node[anchor=south west, font=\footnotesize] at (rel axis cs:0,0) {(b) $(3,4)$};
\end{axis}
\end{tikzpicture}&		\begin{tikzpicture}
		\pgfplotsset{grid style={dashed, gray}}
		\pgfplotsset{every tick label/.append style={font=\footnotesize}}

		\begin{axis}[%
  xmin=4.1,
  xmax=4.3,
  ymode=log,
  ymin=1e-5,
  ymax=0.1,
  yminorticks=true,
  axis background/.style={fill=white, mark size=1.5pt},
  xmajorgrids,
  xminorgrids,
  ymajorgrids,
  yminorgrids,
  width=4cm,
  height=4cm,
  xtick={3.5,3.7,...,10.5},
  minor x tick num=1,
  minor grid style={gray!25},%
  ytick={1,0.1,0.01,0.001,1e-4,1e-5,1e-6,1e-7,1e-8,1e-9,1e-10,1e-11,1e-12,1e-13,1e-14,1e-15},
  xlabel={$\Eb/\No$ (dB)},
  label style={font=\small},
  legend cell align={left},
  legend style={{anchor = south west}, draw=none, fill opacity=0.7, text opacity = 1,legend columns=1,font=\footnotesize, row sep = 0pt}
		]

\addplot [color=black, line width=1pt, mark=o, mark options={solid,fill=white, mark size=1.5pt}]table[x=EbNo,y=FER, col sep=semicolon,row sep=crcr] {
  EbNo;        EsNo;       delta; ErasureProb;  totalFrame;    FE;         FER;         BER;      SPsize;          BE;  throughput;    mis rate\\
   4.1;     3.50316;  0.00837965;   0.0243882;       25600;  2824;    0.110312; 0.000300903;     178.765;      504832; 1.02419e+09; 0.000684866\\ %
   4.2;     3.60316;  0.00776787;   0.0234695;       25600;   371;   0.0144922; 3.19839e-06;     14.4636;        5366; 1.02117e+09; 0.000250934\\ %
   4.3;     3.70316;  0.00718933;   0.0225605;       25600;   279;   0.0108984; 1.57535e-06;     9.47312;        2643; 1.04777e+09; 0.000166318\\ %
   4.4;     3.80316;   0.0066431;   0.0216623;       25600;   267;   0.0104297;  1.5074e-06;     9.47191;        2529;  1.0608e+09; 0.000119793\\ %
   4.5;     3.90316;   0.0061282;   0.0207759;       25600;   180;  0.00703125; 1.02282e-06;     9.53333;        1716;  1.0379e+09; 9.06843e-05\\ %
   4.6;     4.00316;  0.00564362;   0.0199023;       25600;   157;  0.00613281; 8.81255e-07;      9.4172;      1478.5; 1.04112e+09; 7.06893e-05\\ %
   4.7;     4.10316;  0.00518836;   0.0190426;       25600;   153;  0.00597656; 8.38637e-07;     9.19608;        1407; 1.06838e+09; 5.84674e-05\\ %
};%

\addplot [color=KITpalegreen, dashed, line width=1pt, mark=o, mark options={solid,fill=white, mark size=1.5pt}]table[x=EbNo, y=FER, col sep=semicolon,row sep=crcr] {
  EbNo;        EsNo;       delta; ErasureProb;  totalFrame;    FE;         FER;         BER;      SPsize;          BE;  throughput;    mis rate\\
   4.1;     3.50316;  0.00837965;   0.0243882;       25600;  2192;    0.085625; 0.000290267;     222.166;      486988; 1.08443e+09; 0.000695071\\ %
   4.2;     3.60316;  0.00776787;   0.0234695;       76800;    67; 0.000872396; 1.08957e-06;     81.8507;        5484; 9.69919e+08; 0.000250796\\ %
   4.3;     3.70316;  0.00718933;   0.0225605;      384000;    52; 0.000135417; 3.71933e-08;          18;         936; 1.06647e+09; 0.000165762\\ %
};%

\addplot [color=magenta, line width=1pt, mark=x, mark options={solid,fill=white, mark size=2.5pt}]table[x=EbNo, y=FER, col sep=semicolon,row sep=crcr] {
  EbNo;        EsNo;       delta; ErasureProb;  totalFrame;    FE;         FER;         BER;      SPsize;          BE;  throughput;    mis rate\\
   4.1;     3.50316;  0.00837965;   0.0243882;       25600;  2104;   0.0821875; 0.000286046;     228.092;      479906; 1.04662e+09; 0.000680564\\ %
   4.2;     3.60316;  0.00776787;   0.0234695;      102400;    57; 0.000556641; 7.97361e-07;     93.8772;        5351; 1.11258e+09; 0.000251254\\ %
   4.3;     3.70316;  0.00718933;   0.0225605;   1.408e+06;    51; 3.62216e-05; 1.37307e-08;     24.8431;        1267; 1.08551e+09;  0.00016562\\ %
};%
\node[anchor=south west, font=\footnotesize] at (rel axis cs:0,0) {(c) $(3,5)$};
\end{axis}
\end{tikzpicture}&		\begin{tikzpicture}
		\pgfplotsset{grid style={dashed, gray}}
		\pgfplotsset{every tick label/.append style={font=\footnotesize}}

		\begin{axis}[%
  xmin=4.1,
  xmax=4.6,
  ymode=log,
  ymin=1e-4,
  ymax=0.1,
  yminorticks=true,
  axis background/.style={fill=white, mark size=1.5pt},
  xmajorgrids,
  xminorgrids,
  ymajorgrids,
  yminorgrids,
  width=4cm,
  height=4cm,
  xtick={3.5,3.7,...,10.5},
  minor x tick num=1,
  minor grid style={gray!25},%
  ytick={1,0.1,0.01,0.001,1e-4,1e-5,1e-6,1e-7,1e-8,1e-9,1e-10,1e-11,1e-12,1e-13,1e-14,1e-15},
  xlabel={$\Eb/\No$ (dB)},
  label style={font=\small},
  legend cell align={left},
legend style={
    at={(1.1,0.95)},
    anchor=north west,
    draw=black,
    fill opacity=1,
    text opacity=1,
    legend columns=1,
    font=\footnotesize,
    row sep=0pt
}
		]

\addplot [color=black, line width=1pt, mark=o, mark options={solid,fill=white, mark size=1.5pt}]table[x=EbNo,y=FER, col sep=semicolon,row sep=crcr] {
  EbNo;        EsNo;       delta; ErasureProb;  totalFrame;    FE;         FER;         BER;      SPsize;          BE;  throughput;    mis rate\\
   4.1;     3.50316;  0.00837965;   0.0243882;       25600;  3314;    0.129453;  0.00031224;     158.072;      523852; 1.02179e+09; 0.000691476\\ %
   4.2;     3.60316;  0.00776787;   0.0234695;       25600;   627;   0.0244922; 5.93603e-06;     15.8836;        9959; 1.05525e+09; 0.000252447\\ %
   4.3;     3.70316;  0.00718933;   0.0225605;       25600;   529;   0.0206641; 3.72469e-06;     11.8129;        6249; 1.15373e+09; 0.000167665\\ %
   4.4;     3.80316;   0.0066431;   0.0216623;       25600;   436;   0.0170313; 3.01719e-06;     11.6101;        5062; 1.17813e+09; 0.000118732\\ %
   4.5;     3.90316;   0.0061282;   0.0207759;       25600;   362;   0.0141406; 2.57105e-06;     11.9157;      4313.5; 1.17739e+09; 9.09936e-05\\ %
   4.6;     4.00316;  0.00564362;   0.0199023;       25600;   317;   0.0123828; 2.18987e-06;     11.5899;        3674; 1.18187e+09; 7.33459e-05\\ %
   4.7;     4.10316;  0.00518836;   0.0190426;       25600;   277;   0.0108203; 2.02179e-06;     12.2455;        3392; 1.10687e+09; 5.93913e-05\\ %
};%

\addplot [color=KITpalegreen, dashed, line width=1pt, mark=o, mark options={solid,fill=white, mark size=1.5pt}]table[x=EbNo, y=FER, col sep=semicolon,row sep=crcr] {
  EbNo;        EsNo;       delta; ErasureProb;  totalFrame;    FE;         FER;         BER;      SPsize;          BE;  throughput;    mis rate\\
   4.1;     3.50316;  0.00837965;   0.0243882;       25600;  2321;   0.0906641; 0.000289038;     208.929;      484925; 9.56306e+08; 0.000692338\\ %
   4.2;     3.60316;  0.00776787;   0.0234695;       25600;    64;      0.0025; 1.87993e-06;     49.2812;        3154; 1.04127e+09; 0.000252151\\ %
   4.3;     3.70316;  0.00718933;   0.0225605;       51200;    63;  0.00123047; 4.81009e-07;      25.619;        1614; 1.03346e+09;  0.00016584\\ %
   4.4;     3.80316;   0.0066431;   0.0216623;       51200;    57;  0.00111328; 4.77433e-07;     28.1053;        1602; 1.12423e+09; 0.000119774\\ %
   4.5;     3.90316;   0.0061282;   0.0207759;       76800;    50; 0.000651042; 2.52128e-07;       25.38;        1269; 1.12599e+09; 9.07499e-05\\ %
   4.6;     4.00316;  0.00564362;   0.0199023;       76800;    62; 0.000807292; 3.73125e-07;     30.2903;        1878; 1.10664e+09; 7.22724e-05\\ %
   4.7;     4.10316;  0.00518836;   0.0190426;      102400;    56; 0.000546875; 2.29329e-07;     27.4821;        1539; 1.13182e+09;   5.906e-05\\ %
};%

\addplot [color=magenta, line width=1pt, mark=x, mark options={solid,fill=white, mark size=2.5pt}]table[x=EbNo, y=FER, col sep=semicolon,row sep=crcr] {
  EbNo;        EsNo;       delta; ErasureProb;  totalFrame;    FE;         FER;         BER;      SPsize;          BE;  throughput;    mis rate\\
   4.1;     3.50316;  0.00837965;   0.0243882;       25600;  2267;   0.0885547; 0.000298957;     221.247;      501567; 1.10349e+09; 0.000693964\\ %
   4.2;     3.60316;  0.00776787;   0.0234695;       25600;    66;  0.00257813; 2.06411e-06;     52.4697;        3463; 1.03065e+09; 0.000249906\\ %
   4.3;     3.70316;  0.00718933;   0.0225605;       51200;    61;  0.00119141; 5.87404e-07;     32.3115;        1971; 1.06455e+09; 0.000166655\\ %
   4.4;     3.80316;   0.0066431;   0.0216623;       76800;    66; 0.000859375; 4.33326e-07;     33.0455;        2181;   1.162e+09; 0.000119661\\ %
   4.5;     3.90316;   0.0061282;   0.0207759;       76800;    63; 0.000820312; 3.93391e-07;     31.4286;        1980;  1.1211e+09; 9.08444e-05\\ %
   4.6;     4.00316;  0.00564362;   0.0199023;       76800;    50; 0.000651042; 3.08752e-07;       31.08;        1554; 1.15423e+09; 7.21606e-05\\ %
   4.7;     4.10316;  0.00518836;   0.0190426;       76800;    55; 0.000716146; 3.59416e-07;     32.8909;        1809; 1.06417e+09; 5.98524e-05\\ %
};%
\node[anchor=south west, font=\footnotesize] at (rel axis cs:0,0) {(d) $(4,4)$};
\end{axis}
\end{tikzpicture}\\
\end{tabular}
    \caption{Post-decoding FER of the \([256,239,6]\) eBCH-code-based \ac{PC} under rejection-sampling-based evaluation for injected stall patterns with dimensions \({(K,L)\in\{(3,3),(3,4),(3,5),(4,4)\}}\). The proposed soft-aided post-processing (PP) technique is compared with conventional post-processing~\cite{holzbaur2019improved} and \acs{RDRSD} without post-processing.}
    \label{fig:ISPP_all}
\end{figure*}

\subsection{Dealing with Large Stall Patterns}\label{subsec:largeStallPatterns}
Large stall patterns outside the scope of Theorem~\ref{theorem:smallKL} may lead to non-deterministic post-processing outcomes due to several failure mechanisms. We next analyze these mechanisms and discuss possible approaches to mitigate them.

First, if the number of preceding decoding iterations is insufficient, residual errors may remain outside the identified stall pattern. In this case, the marked CN set is inaccurate, leading to an unpredictable post-processing outcome. This issue is most pronounced when the \ac{SNR} is not sufficiently high. To mitigate it, we skip post-processing when the detected stall pattern is too large. Specifically, in our implementation, post-processing is skipped if \(K\geq \dmin\) and \(L\geq \dmin\), and at least one of \(K\) and \(L\) exceeds \(\dmin+2\).

Second, undetectable miscorrections can cause some erroneous CNs to remain unmarked.
This issue can be partially reduced by enlarging the marked CN set.
One approach, proposed in~\cite{hager2018approaching}, records the corrected positions of each CN during iterative decoding.
A row CN is then added to the marked set if its corrected positions form a subset of the currently marked column CNs; column CNs are treated analogously.
Another approach, proposed in~\cite{lagendijk2025lowering}, adds CNs that performed corrections in the last iteration to the marked set, thereby helping to identify CNs involved in miscorrection loops. We adopt both methods in our post-processing step. In Algorithm~\ref{algo:PP}, the subset-based enlargement is represented by the two conditional loops after the initial marking step, while CNs that fail or correct bits in the probing \ac{iBDD} iteration are recorded before these loops.

Third, in the presence of large stall patterns, subsequent \ac{iBDD}/\acs{iEaED} iterations after the bit flipping/erasing may cause miscorrections, which further introduce new errors outside of the current stall pattern.
This effect can be partly mitigated by restricting the following corrections to the marked CNs.

Finally, for large stall patterns, flipping all marked bits may transform the original pattern into another stall pattern.
An example is shown in Fig.~\ref{fig:stallPatternReverse67}.
Similarly, erase-and-iterate post-processing may create a Type-III stall pattern consisting only of erasures.
This issue can be mitigated by allowing multiple rounds of post-processing steps, as also used in~\cite{holzbaur2019improved,lagendijk2025lowering}.
Another approach is to only flip part of the marked bits, e.g., by flipping only one row of a large marked stall pattern before resuming \ac{iBDD}~\cite{holzbaur2019improved}.
To address this issue, we propose a soft-aided partial-erasure approach.
For each bit \(i\) located at the intersection of a marked row CN and a marked column CN, a random integer \(R_i\) is drawn uniformly from \(\{0,1,\ldots,2^q-1\}\).
The bit is erased if
\(
    R_i \geq 2D_i,
\)
where \(D_i\) denotes the bit \ac{DRS}.
Thus, bits with smaller \ac{DRS} values are more likely to be erased, while highly reliable bits are preserved with high probability.

Motivated by the preceding enumeration of failure mechanisms, Algorithm~\ref{algo:PP} summarizes the proposed post-processing procedure, which combines existing techniques with the modifications introduced in this work. The algorithm is applied after the regular \(L\) iterations of \acs{RDRSD}. During post-processing, miscorrection detection is disabled to avoid the effect of persistently misidentified anchor bits. Otherwise, such anchor bits may repeatedly reject the corrections needed to break a stall pattern.

\subsection{Evaluation of the Post-Processing Techniques}

\begin{figure}[t]
    \definecolor{mygreen}{rgb}{0.69, 0.87, 0.541}%
    \definecolor{myblue}{rgb}{0,0.4470,0.7410}
    \definecolor{myblack}{rgb}{0.2,0.2,0.2}
    \begin{tikzpicture}
        \centering
        \input{figures/clean255code}
        \vspace{-2ex}
    \end{tikzpicture}
    \caption{Post-FEC BER performance of the product code with \([256,239,6]\) eBCH component codes, comparing \ac{iBDD}, \acs{RDRSD}, and post-processing-based decoding schemes. The error bars indicate \(95\%\) confidence intervals.}
    \label{fig:BERcurve}
\end{figure}

We benchmark the proposed soft-aided post-processing (PP) technique against the method of~\cite{holzbaur2019improved}. Because post-processing operates in the error-floor region, where the post-decoding \ac{FER} is very low, conventional Monte-Carlo simulation is computationally demanding. We therefore propose a rejection-sampling-based simulation method.

For each simulated error pattern, we randomly select \(K\) rows and \(L\) columns and condition the \(K\times L\) intersection bits on hard-decision errors. For each transmitted bit \(x\), we draw \(r\sim\mathcal{N}(0,\sigma^2)\) and reject the sample unless
\(
    \psi\bigl(\mu(x)+r\bigr)\neq x .
\)
All remaining bits are generated according to the original channel model with independent noise \({r\sim\mathcal{N}(0,\sigma^2)}\). This inserts a \(K\times L\) hard-decision stall pattern while preserving the soft information distribution according to the channel condition.\footnote{A related evaluation approach was used in~\cite{holzbaur2019improved}, where a dedicated channel was constructed to deliberately insert certain stall patterns.
This approach is not directly suitable here, since the soft information associated with the received values affects the behavior of the proposed decoder and post-processing method.}
To gain insight into the post-processing techniques, we consider four dimensions \((K,L)\), namely \((3,3)\), \((3,4)\), \((3,5)\), and \((4,4)\), for the $[256,239,6]$ eBCH-code-based PC.
The corresponding results are shown in Fig.~\ref{fig:ISPP_all}.

With \acs{RDRSD}, a large fraction of the sampled stall patterns is already resolved during iterative decoding, since unreliable positions can be declared as erasures and subsequently recovered. In contrast, for \ac{iBDD} without post-processing, the post-decoding \ac{FER} remains close to \(1\) for all sampled patterns, as \ac{iBDD} cannot resolve such patterns unless rare miscorrections remove some of the errors.

Both post-processing techniques resolve small stall patterns with high probability. Across the considered cases, the proposed soft-aided post-processing technique outperforms the method of~\cite{holzbaur2019improved}. For the injected \((4,4)\) stall patterns, however, both methods show limited effectiveness, since these patterns can evolve into larger stall patterns if miscorrections turn component words with four errors into valid codewords.

In Fig.~\ref{fig:BERcurve}, post-processing is applied to both \ac{iBDD} and \acs{RDRSD}, and regular Monte Carlo simulations are performed to show the actual post-decoding \ac{BER} performance.
For \ac{iBDD}, we use the method of~\cite{holzbaur2019improved}, whereas for \acs{RDRSD}, we apply the proposed Algorithm~\ref{algo:PP}.
After post-processing, both decoders achieve significantly lower error rates than their counterparts without post-processing and also perform below the computed union bound (UB) for the miscorrection-free case.
For example, for \acs{RDRSD}, the union bound predicts an error floor around a \ac{BER} of \(10^{-10}\).
With post-processing, the steep decrease continues down to approximately \(10^{-12}\).
Below this level, the curve decreases more gradually due to the non-deterministic outcomes discussed in Sec.~\ref{subsec:largeStallPatterns}, but no error floor appears within the simulated range. At \(\Eb/\No=\SI{4.5}{dB}\), no errors occurred in \(1.5503\cdot 10^{11}\) simulated PC blocks. Assuming independent events of \(36\) erroneous bits per \(65536\)-bit block, the zero-count Poisson interval~\cite[Sec.~2.12]{van2011statistical} gives the one-sided \(95\%\) \ac{BER} upper bound \(1.06\cdot 10^{-14}\) shown in Fig.~\ref{fig:BERcurve}.

\section{Conclusion}\label{sec:conclusion}

In this paper, we presented the \acs{RDRSD} decoder for \acp{GPC}. Compared with conventional \ac{DRSD}, the proposed refinement reduces internal decoder data flow and memory requirements by maintaining a syndrome-domain implementation that is compatible with existing \ac{iBDD} architectures, while preserving a large decoding gain in the waterfall region. We also derived a union bound on the error floor of miscorrection-free error-and-erasure decoding and showed that the \acs{RDRSD} approaches this bound in the error-floor region. For PCs with component codes of small minimum distance, we proposed a soft-aided post-processing technique that further reduces the error floor. The combination of high decoding gain, reduced implementation overhead, and low error floors makes the proposed decoder a promising candidate for future low-complexity optical communication systems. An open-source implementation is provided at \url{https://github.com/kit-cel/DRSD4GPCs}.

\section*{Acknowledgement}
The authors would like to thank F.~Ritter, H.~Jäkel, S.~Obermüller, and L.~Rapp for the helpful discussion.

\ifCLASSOPTIONcaptionsoff
  \newpage
\fi

\bibliographystyle{IEEEtran}
\bibliography{bibtex/bib/IEEEabrv, bibtex/bib/lit, bibtex/bib/myown}

\end{document}